\newcommand{\fscdarxiv}[2]{#2} %
\title{Substitution for Non-Wellfounded Syntax with Binders through Monoidal Categories} %
\author{Ralph Matthes}{IRIT, Université de Toulouse, CNRS, Toulouse INP, UT3, Toulouse, France \and\url{https://www.irit.fr/~Ralph.Matthes/}}{Ralph.Matthes@irit.fr}{https://orcid.org/0000-0002-7299-2411}{}
\author{Kobe Wullaert}{Delft University of Technology, Netherlands \and\url{https://kfwullaert.github.io/}}{K.F.Wullaert@tudelft.nl}{https://orcid.org/0000-0003-4281-2739}{}
\author{Benedikt Ahrens}{Delft University of Technology, Netherlands \and University of Birmingham, United Kingdom \and \url{https://benediktahrens.gitlab.io} }{B.P.Ahrens@tudelft.nl}{https://orcid.org/0000-0002-6786-4538}{}
\authorrunning{R. Matthes, K. Wullaert, and B. Ahrens} %
\keywords{Non-wellfounded syntax, Substitution, Monoidal categories, Actegories, Tensorial strength,  Proof assistant Coq, UniMath library} %
\newcommand{\shorthash}{7432fee}
\newcommand{\nolinkcoqident}[1]{\nolinkurl{#1}} %
\newcommand{\coqident}{\begingroup\@makeother\#\@coqident}
\newcommand{\@coqident}[3][]{%
  \ifthenelse{\isempty{#2}}%
  {\nolinkcoqident{#3}}%
  {\ifthenelse{\isempty{#1}}%
  {\href{\coqdocurl{#2}{#3}}{\nolinkcoqident{#3}}}%
  {\href{\coqdocurl{#2}{#3}}{\nolinkcoqident{#1}}}}%
\endgroup}
\newcommand{\coqfile}[2]{%
  \ifthenelse{\isempty{#1}}%
  {\href{\coqdocbaseurl #2.html}{\nolinkcoqident{#2.v}}}%
  {\href{\coqdocbaseurl #1.#2.html}{\nolinkcoqident{#2.v}}}}
\newcommand{\plan}[1]{}
\newcommand{\BA}[1]{}
\newcommand{\RM}[1]{}
\newcommand{\KW}[1]{}
\renewcommand{\plan}[1]{\textcolor{blue}{#1}\PackageWarning{TODO}{TODO: #1}}
\renewcommand{\BA}[1]{\textcolor{orange}{BA: #1}\PackageWarning{TODO}{TODO: #1}}
\renewcommand{\RM}[1]{\textcolor{purple}{RM: #1}\PackageWarning{TODO}{TODO: #1}}
\renewcommand{\KW}[1]{\textcolor{magenta}{KW: #1}\PackageWarning{TODO}{TODO: #1}}
\newcommand{\cfont}[1]{\ensuremath{\mathsf{#1}}}
\newcommand{\catfont}[1]{\ensuremath{\mathcal{#1}}}
\newcommand{\CC}{\catfont{C}}
\newcommand{\DD}{\catfont{D}}
\newcommand{\EE}{\catfont{E}}
\newcommand{\tensor}{\otimes}
\newcommand{\actor}{\mathsf{act}}
\newcommand{\catb}[1]{\mathbf{#1}}
\newcommand{\SET}{\catb{Set}}
\newcommand{\cf}{cf.\xspace}
\newcommand{\eg}{e.\,g.\xspace}
\newcommand{\ie}{i.\,e.\xspace}
\newtheorem{construction}[theorem]{Construction}
\DeclareFontFamily{U}{min}{}
\DeclareFontShape{U}{min}{m}{n}{<-> udmj30}{}
\newcommand{\teletype}[1]{\ensuremath{\mathtt{#1}}}
\newcommand{\systemname}[1]{\teletype{\color{darkgray}#1}\xspace}
\newcommand{\UniMath}{\systemname{UniMath}}
\newcommand{\Coq}{\systemname{Coq}}
\newcommand{\VV}{\mathcal{V}}
\newcommand{\WW}{\mathcal{W}}
\newcommand{\Id}[1]{\cfont{Id}}
\newcommand{\idwt}[1]{\cfont{id}}
\newcommand{\hcomp}{\mathbin{\cdot}} %
\newcommand{\lam}{\cfont{lam}}
\newcommand{\app}{\cfont{app}}
\newcommand{\set}{\ensuremath{\mathsf{Set}}\xspace}
\newcommand{\arity}{\cfont{ar}}
\newcommand{\sort}{\cfont{S}}
\newcommand{\ccsort}{\CC^{\sort}}
\newcommand{\bool}{\cfont{bool}}
\newcommand{\tinl}{\cfont{inl}}
\newcommand{\tinr}{\cfont{inr}}
\newcommand{\proj}[1]{\cfont{pr}_{#1}}
\newcommand{\option}{\cfont{option}}
\newcommand{\convert}{\ensuremath{\equiv}}
\newcommand{\eqdef}{\ensuremath{:\convert}}
\def\prd#1{\@ifnextchar\bgroup{\prd@parens{#1}}{%
    \@ifnextchar\sm{\prd@parens{#1}\@eatsm}{%
    \@ifnextchar\prd{\prd@parens{#1}\@eatprd}{%
    \@ifnextchar\;{\prd@parens{#1}\@eatsemicolonspace}{%
    \@ifnextchar\\{\prd@parens{#1}\@eatlinebreak}{%
    \@ifnextchar\narrowbreak{\prd@parens{#1}\@eatnarrowbreak}{%
      \prd@noparens{#1}}}}}}}}
\def\prd@parens#1{\@ifnextchar\bgroup%
  {\mathchoice{\@dprd{#1}}{\@tprd{#1}}{\@tprd{#1}}{\@tprd{#1}}\prd@parens}%
  {\@ifnextchar\sm%
    {\mathchoice{\@dprd{#1}}{\@tprd{#1}}{\@tprd{#1}}{\@tprd{#1}}\@eatsm}%
    {\mathchoice{\@dprd{#1}}{\@tprd{#1}}{\@tprd{#1}}{\@tprd{#1}}}}}
\def\@eatsm\sm{\sm@parens}
\def\prd@noparens#1{\mathchoice{\@dprd@noparens{#1}}{\@tprd{#1}}{\@tprd{#1}}{\@tprd{#1}}}
\def\lprd#1{\@ifnextchar\bgroup{\@lprd{#1}\lprd}{\@@lprd{#1}}}
\def\@lprd#1{\mathchoice{{\textstyle\prod}}{\prod}{\prod}{\prod}({\textstyle #1})\;}
\def\@@lprd#1{\mathchoice{{\textstyle\prod}}{\prod}{\prod}{\prod}({\textstyle #1}),\ }
\def\tprd#1{\@tprd{#1}\@ifnextchar\bgroup{\tprd}{}}
\def\@tprd#1{\mathchoice{{\textstyle\prod_{(#1)}}}{\prod_{(#1)}}{\prod_{(#1)}}{\prod_{(#1)}}}
\def\dprd#1{\@dprd{#1}\@ifnextchar\bgroup{\dprd}{}}
\def\@dprd#1{\prod_{(#1)}\,}
\def\@dprd@noparens#1{\prod_{#1}\,}
\def\@eatnarrowbreak\narrowbreak{%
  \@ifnextchar\prd{\narrowbreak\@eatprd}{%
    \@ifnextchar\sm{\narrowbreak\@eatsm}{%
      \narrowbreak}}}
\def\@eatlinebreak\\{%
  \@ifnextchar\prd{\\\@eatprd}{%
    \@ifnextchar\sm{\\\@eatsm}{%
      \\}}}
\def\@eatsemicolonspace\;{%
  \@ifnextchar\prd{\;\@eatprd}{%
    \@ifnextchar\sm{\;\@eatsm}{%
      \;}}}
\newcommand{\compose}[2]{\ensuremath{{#1}\cdot{#2}}}
\newcommand{\defeq}{\coloneqq}
\newcommand{\pointforv}{\textit{pv}}
\newcommand{\pointforw}{\textit{pw}}
\newcommand{\ptd}{\mathit{ptd}}
\newcommand{\gst}[2]{\llparenthesis {#2} \rrparenthesis_{#1}}
\newcommand{\tout}{\cfont{out}}
\newcommand{\teqm}{\cfont{eqm}}
\newcommand{\cosign}{{\textit{co}}}
\newcommand{\ns}[2]{#1+\cdots+#2} %
\newcommand{\tuple}[1]{\langle #1 \rangle}
\newcommand{\seqt}[3]{#1\vdash #2:#3}
\newcommand{\Threet}{\mathsf{THREE}}
\newcommand{\impl}{\to}
\newcommand{\sortv}{\cfont v}
\newcommand{\sortt}{\cfont t}
\newcommand{\sorte}{\cfont e}
\newcommand{\nat}{\mathbb{N}}
\newcommand{\atom}{\cfont{atom}}
\newcommand{\atotype}{\cfont{atotype}}
\begin{document}

\maketitle

\begin{abstract}
  We describe a generic construction of non-wellfounded syntax involving variable binding and its monadic substitution operation.

  Our  construction of the syntax and its substitution takes place in category theory, notably by using monoidal categories and strong functors between them.
  A language is specified by a multi-sorted binding signature, say $\Sigma$.
  First, we provide sufficient criteria for $\Sigma$ to generate a language of possibly infinite terms, through $\omega$-continuity.
  Second, we construct a monadic substitution operation for the language generated by $\Sigma$.
  A cornerstone in this construction is a mild generalization of the notion of heterogeneous substitution systems developed by Matthes and Uustalu;
  such a system encapsulates the necessary corecursion scheme for implementing substitution.
  
  The results are formalized in the Coq proof assistant, through the
  UniMath library of univalent mathematics.
\end{abstract}

\section{Introduction}\label{sec:intro}

\subsection{General Motivation for Non-Wellfounded Syntax With Binders}
Non-wellfounded syntax with binders appears in its purest form in the coinductive reading of untyped $\lambda$-calculus. \emph{Potentially} non-wellfounded $\lambda$-terms still consist of variables, $\lambda$-abstractions and applications only, but the construction process with these constructors can go on forever.
Such construction processes can be described through functional programming, and the host programming language then serves as a meta-language for the description of those infinitary $\lambda$-terms. %
Instead of taking a programming perspective, one can also ask if a possibly circular definition of such a non-wellfounded term is well-formed, in the sense that it uniquely determines such a structure.
Naturally, uniqueness is understood up to bisimilarity, \ie, two such non-wellfounded $\lambda$-terms are considered equal if their infinite unfoldings have the same labels (indicating the applied constructor) in the same order on each level, starting at the root.
The presence of variable binding presents the extra challenge of having to consider this bisimilarity modulo renaming of bound variables, \ie, $\alpha$-equivalence -- a challenge that is amplified by the possibility of having an infinite number of bindings in a non-wellfounded $\lambda$-term.
In this paper, we either work on an abstract level that does not reveal this challenge, or we resort to a representation using nested datatypes that is a form of de Bruijn representation with well-scopedness guaranteed by the typing system (see, \eg, \cite{DBLP:journals/jfp/AllaisACMM21}), and therefore $\alpha$-equivalence is just not needed.

There are many uses of coinductive untyped $\lambda$-terms (such as Böhm trees), and coinductive readings of term structures with binding (also with simple types, \eg, as in an automata-theoretic analysis by Melliès \cite{DBLP:conf/lics/Mellies17}) have a counterpart in infinitary rewriting.

\subsection{A Motivational Application Scenario}\label{sec:appscenario}
We have an application scenario in mind for which the ``static'' part, \ie, the well-typed syntax itself, is important, even without the aforementioned ``dynamics'' of infinitary rewriting. It is more complicated than just $\lambda$-calculus, notably by the presence of embedded inductive types.
This in particular motivates our search for \emph{datatype-generic} constructions for a wide range of non-wellfounded simply-typed syntax.

The application scenario is as follows:
We want to represent the entire search space for inhabitants in simply-typed $\lambda$-calculus (STLC) by a potentially non-wellfounded term of a suitable calculus.
The inhabitation problem itself is the following question: ``Given a context $\Gamma$ and a type $A$ of STLC, is there a term $t$ of STLC such that $\Gamma\vdash t:A$?''. Taking into account the entire search space means including infinite runs that arise in a (naive) search loop. And the term of the ``suitable calculus'' should again have type $A$ in context $\Gamma$ but represent the search space and not only be a single inhabitant.
This calculus is informally given by the following grammar:%
\[
\begin{array}{lcrcl}
\textrm{(terms)} &  & N & ::=_\cosign & \lambda x^A.N\,|\, \ns{E_1}{E_n}\\
\textrm{(elimination alternatives)} &  & E & ::=_\cosign & x \tuple{N_1,\ldots,N_k}%
\end{array}
\]
with one constructor for each $n,k\geq0$, hence we have sums with any finite number of summands and tuples with any finite number of arguments.
We write $x$ in place of $x \tuple{}$---this captures $k=0$.
The elimination alternatives resemble the neutral terms of $\lambda$-calculus of the form $xN_1\ldots N_k$ -- we are only searching for inhabitants in normal form. They have this name because they correspond to repeated implication elimination (as expressed by STLC typing) and they are summands in $\ns{E_1}{E_n}$ that indicate a finite choice between those ``alternative'' $n$ summands.
Search for normal forms in STLC only has finitely many options at each choice point, even though, \eg, there are infinitely many inhabitants of the type of Church numerals.

The elements of the syntactic category of terms are also called ``forests''. The index $\cosign$ means that the grammar is read coinductively. There are two clauses that embed (finite) lists into the codata type. It therefore presents at least the challenges of non-wellfounded ``rose trees'', \ie, finitely-branching unlabeled trees without a bound on the branching width.
The scenario comes from \cite[Section 3.2]{EspiritoSantoMatthesPintoCoinductiveApproachAPAL}%
, and we plan to study it with our formalization.
The typing rules for these expressions are given in \cref{fig:typingforests}, with $\Gamma$ ranging over finite(!) typing contexts.
They are the usual implication introduction and a vectorized implication elimination (down to atomic types $p$), and the rule for typing alternatives (of the same atomic type) -- and all rules are read coinductively, indicated by the $\cosign$ mark. A well-typed such term hence \emph{locally} conforms to intuitionistic implicational logic.
\begin{figure}[tb]
  \[
    \begin{array}{c}
\infer[\!\!\!\cosign]{\seqt\Gamma{\lambda x^A.N}{A\impl B}}{\seqt{\Gamma,x:A}
NB}\qquad\!\!
\infer[\!\!\!\cosign]
 {\seqt{\Gamma}{x\tuple{N_i}_{i\leq k}}{p}}
  {(x:\vec B\impl p)\in\Gamma\quad\!\forall i\leq k,\,\seqt\Gamma{N_i}{B_i}}\qquad\!\!
      \infer[\!\!\!\cosign]{\seqt{\Gamma}{\sum_{i\leq n}E_i}p}{\forall i\leq n,\,\seqt\Gamma{E_i}p}
    \end{array}
  \]
\caption{Coinductive typing rules for simple types in the application scenario}\label{fig:typingforests}
\end{figure}
For illustration, we give a well-typed forest in graphical form -- the much easier example of Church numerals is found in \cref{sec:churchnumerals}.
Let $\Threet\eqdef((0\impl 0)\impl 0)\impl 0$ for an atom $0$.
This is the simplest type of rank (\ie, nesting depth) 3. We define a closed forest
of type $\Threet$ in \cref{fig:threegraph} \cite[Example 16]{EspiritoSantoMatthesPintoCoinductiveApproachAPAL}.
\begin{figure}[tb]
 \begin{tikzpicture}[inner sep=0.5mm,
         place/.style={circle,draw=blue!30,fill=blue!10,thick},
         scale=1.0,
         decont/.style={rectangle,draw=blue!30,fill=blue!10,thick},
         scale=0.8]
         \node[place] at (-7,0) (root)  {};
         \node[place] at (-5,0) (lambdaf)  {$\lambda f^{(0\impl0)\impl0}$};
         \node[place] at (-2,0) (f0)  {$f@$};
         \node[place] at (0,0) (lambday)  {$\lambda y^0$};
         \node[place] at (0,-1.3) (sum)  {$+$};
         \node[place] at (-1,-2.5) (y)  {$y$};
         \node[place] at (1,-2.5) (f1)  {$f@$};
         \node[place] at (1,-4) (lambdaz)  {$\lambda z^0$};
         \node[decont] at (4,-2.5) (decont)  {$[(y+z)/y]$};
         \draw [->] (root) to [out=0,in=180] (lambdaf);
          \draw [->] (lambdaf) to [out=0,in=180] (f0);
          \draw [->] (f0) to [out=0,in=180] (lambday);
          \draw [->] (lambday) to [out=270,in=90] (sum);
          \draw [->] (sum) to [out=225,in=90,looseness=0.8] (y);
          \draw [->] (sum) to [out=315,in=90,looseness=0.8] (f1);
          \draw [->] (f1) to [out=270,in=90] (lambdaz);
          \draw [->, blue, thick] (lambdaz) to [out=315,in=270,looseness=1] (decont);
          \draw [->, blue, thick] (decont) to [out=90,in=45,looseness=1] (sum);
        \end{tikzpicture}
\caption{Forest representation of all inhabitants of $\Threet$}\label{fig:threegraph}
\end{figure}
$f@$ is short for $f\tuple N$ with $N$ given by where the arrow points to.
The ``decontraction operation'' in the back link resides on the meta-level and is specific to the summation in this example grammar:
$[(y+z)/y]$ (written $[y:0,z:0/y:0]$ in the cited paper) is decontraction and says that every occurrence of $y$ has to be replaced by a sum once with $y$ and once with $z$ in place of the original $y$. This forest representation can be seen as a formal approach to the informal concept of ``inhabitation machines'' \cite[pp.\,34--38]{lambdacalculuswithtypes}.
All the inhabitants of $\Threet$ can be read off this forest: omitting types, they are of the form $\lambda f.f\tuple{\lambda y_1.f\tuple{\lambda y_2.f\tuple{\cdots\tuple{\lambda y_n.y_i}\cdots}}}$, with $1\leq i\leq n$.
The individual inhabitants are wellfounded, but the forests representing the entire search spaces (for all simple types) are obtained coinductively in \cite{EspiritoSantoMatthesPintoCoinductiveApproachAPAL}.
We use a generic construction of syntax such as the forests of this scenario that is based on category theory.

\subsection{Context and Overview of this Paper}

For \emph{wellfounded} languages with variable binding, categorical semantics are given in \cite{DBLP:conf/lics/FiorePT99}.
The importance of monoidal structure for the modelling of substitution is emphasized there; many of the constructions are given on the level of monoidal categories, and are later instantiated to a suitable category of contexts.
A very extensive overview of work on substitution for wellfounded syntax with binders, comparing \cite{DBLP:conf/lics/FiorePT99} and subsequent work by the same and other authors, is given by Lamiaux and Ahrens \cite{lamiaux2024introduction}.
A categorical semantics of \emph{non}-wellfounded syntax with binding appeared in \cite{MatthesUustaluTCS} involving the first author. That work is set concretely in endofunctor categories instead of general monoidal categories.

In that context, the present paper makes the following contributions.
Firstly, the new definitions and results of the present paper lift the approach of \cite{MatthesUustaluTCS} to the abstraction level of monoidal categories -- to reach the same abstraction level as \cite{DBLP:conf/lics/FiorePT99}.
Secondly, we provide a full type-theoretic formalization of the results on the abstract level.
Using both of these contributions, by (non-trivial) instantiation, we get a tool chain from multi-sorted binding signatures to certified monadic substitution for non-wellfounded syntax and thus the non-wellfounded counterpart to the tool chain described in \cite{CPP22} involving two of the present authors. Such a tool chain is absent from \cite{MatthesUustaluTCS} even on the informal level (neither multi-sorted binding signatures nor unsorted binding signatures are considered). Our approach is now general enough so that the monoidal category underlying \cite{DBLP:conf/lics/FiorePT99} and its ramifications can also be studied concerning non-wellfounded syntax and substitution for it.

In more detail, we construct non-wellfounded syntax from final coalgebras in suitable functor categories, hence based on category theory.
Variable binding is modelled through the use of nested datatypes, as, e.g., in \cite{birdmeertens,DBLP:journals/jfp/BirdP99}.
The structure map of the final coalgebra is, of course, an isomorphism with its inverse providing an algebra structure.
This excludes the existence of exotic terms.
We benefit from that abstract view in order to construct a monadic substitution operation similar to \cite{DBLP:conf/csl/AltenkirchR99}, \ie, a meta-level operation that is not specific to the application scenario presented above (in contrast to the decontraction operation it features).
The qualifier ``monadic'' implies that the generic approach includes proving the monad laws.
For the case of wellfounded syntax, this is well-established in the literature (see, \eg, \cite{DBLP:conf/csl/AltenkirchR99}).
For the non-wellfounded case but without types, this has also been done before \cite{DBLP:journals/corr/KurzPSV13}.

Further previewing our technical contributions, our approach is to generalize the notion of heterogeneous substitution systems \cite{MatthesUustaluTCS} from endofunctor categories to monoidal categories.
Those systems (abbreviated HSS) were meant as a tool to construct monadic substitution both for wellfounded and non-wellfounded syntax -- by a common abstraction that serves as a pivotal structure between initial algebras and final coalgebras, respectively, on the input side and the substitution monad as output.
We call the generalization \emph{monoidal heterogeneous substitution systems} (MHSS).
All these ingredients have been considered before for the sake of representation of wellfounded syntax \cite[Section I.1.2]{DBLP:conf/lics/Fiore08}\cite[Section 5.2.1]{DBLP:phd/ethos/Hur10}.
\cref{sec:syntaxinmoncat} is the core contribution of this paper, making the step to non-wellfounded syntax on the more abstract level of monoidal categories.
The results in that section demonstrate the pivotal role of MHSS: from a final coalgebra, a MHSS is constructed, and from a MHSS, a monoid is constructed, which abstracts away from monadic substitution.
\cref{sec:msbs} applies the results of \cref{sec:syntaxinmoncat} to the endofunctor scenario -- which is hardwired into the definitions in \cite{MatthesUustaluTCS}.

\subsection{Synopsis}
The remainder of this paper is structured as follows.
In \cref{sec:prelim} we review some prerequisites from category theory that are used later in the paper. \cref{sec:syntaxinmoncat} presents the construction of non-wellfounded syntax with substitution on the level of monoidal categories. As promised above, it generalizes both to monoidal categories and from wellfounded to non-wellfounded syntax -- sloppily construable as ``pushout'' of these two directions.
\cref{sec:msbs} applies the results of \cref{sec:syntaxinmoncat} to the endofunctor scenario, capturing simply-typed non-wellfounded syntax (with binding) generically.
The appendix contains technical complements. \fscdarxiv{However, for lack of space, \cref{sec:appbackground} and \cref{sec:constrptdstrength} are only present in the full version \cite{DBLP:journals/corr/abs-2308-05485}.}{Notice that, for lack of space, \cref{sec:appbackground} and \cref{sec:constrptdstrength} are not present in the published conference version \cite{FSCD24MatthesWullaertAhrens}.}

All of the definitions and results presented in this paper (except for the motivational application scenario in \cref{sec:appscenario}) are formalized and computer-checked in \UniMath~\cite{UniMath}, a library of univalent mathematics based on the computer proof assistant \Coq~\cite{coq}.
Throughout this paper, definitions and results are annotated by \Coq identifiers for the corresponding definitions and results in our library.
These identifiers are hyperlinks leading to an HTML version of the proof code; for instance, clicking on \coqident{CategoryTheory.Monoidal.Categories}{monoidal} brings you to the definition of monoidal category.
The formalization is not the main topic of this paper (a discussion of some formalization aspects is found in \cref{sec:formal}); we use it mainly to relieve ourselves from the burden of writing out lengthy and uninteresting proofs, and the reader from the burden of reading them.
Instead, we restrict ourselves to pointing the reader to interesting aspects of proofs and constructions, and aim to convey the intuition behind -- and useful applications of -- our work.

\section{Preliminaries}\label{sec:prelim}

We assume working knowledge of category theory and mostly only point to specific choices of notation.
We write \( a : \CC\) to indicate that \(a\) is an object in category \(\CC\);
we write %
\(f : a \to b\) to indicate that \(f\) is a morphism from \(a\) to \(b\) in \(\CC\).
Following the choice adopted for the \UniMath library, composition is written in ``diagrammatic'' order, \ie, the composite of \(f : a \to b\) and \(g : b \to c\) is denoted \(\compose{f}{g} : a \to c\). %
We will make use of a category $\set$ of sets; objects of this category are called ``small'' sets.

\subsection{Monoidal Categories and Actegories}\label{sec:moncatsacts}

In this section we briefly review the notions of monoidal category and of actegory.

A monoidal category is given by a six-tuple $(\CC,\otimes,I,\lambda,\rho,\alpha)$ where $\CC$ is a category,
$\otimes:\CC\times\CC\to\CC$, $I:\CC$, $\lambda=(\lambda_x)_{x:\CC}$ (the \emph{left unitor}) with $\lambda_x:I\otimes x\to x$, $\rho=(\rho_x)_{x:\CC}$ (the \emph{right unitor}) with $\rho_x:x\otimes I\to x$ and $\alpha=(\alpha_{x,y,z})_{x,y,z:\CC}$ (the \emph{associator}) with $\alpha_{x,y,z}:(x\otimes y)\otimes z\to x\otimes(y\otimes z)$. %
The unitors and the associator are required to be natural isomorphisms, and are furthermore subject to coherence laws called the ``triangle law'' and the ``pentagon law'', recalled in \fscdarxiv{\cref{sec:appbackground} that is only available in the full version \cite{DBLP:journals/corr/abs-2308-05485}}{\cref{fig:moncatlaws} in the appendix}.
We will use the letter $\VV$ to indicate the first component of a monoidal category and, by slight abuse of language, we even call $\VV$ a monoidal category when the other components are left implicit. We can also just mention $(\CC,\otimes,I)$ or $(\VV,\otimes,I)$.

For the proper understanding of the strength notion and for the construction process of a strength in our application scenario, we use actions of monoidal categories on categories, called \emph{actegories}. For the naming of concepts, we vaguely follow \cite{Actegories}. %

Given a monoidal category $\VV$, %
a (left) $\VV$-actegory is given by a quadruple $(\CC,\odot,\lambda,\actor)$ where $\CC$ is a category,
$\odot:\VV\times\CC\to\CC$ (the \emph{action}), $\lambda=(\lambda_x)_{x:\CC}$ (the \emph{unitor}) with $\lambda_x:I\odot x\to x$, and $\actor=(\actor_{v,w,x})_{v,w:\VV,x:\CC}$ (the \emph{actor}) with $\actor_{v,w,x}:(v\otimes w)\odot x\to v\odot(w\odot x)$.
The unitor and the actor are required to be natural isomorphisms, and are furthermore required to satisfy coherence laws analogous to the ones for monoidal categories, called also ``triangle law'' and ``pentagon law'' that are found in \fscdarxiv{\cref{sec:appbackground} (only in \cite{DBLP:journals/corr/abs-2308-05485})}{\cref{fig:actegorylaws} in the appendix}.
We consider it as important that actegories are a kind of widening of the concept of monoidal categories, in the following sense: Given a monoidal category $(\VV,\otimes,I,\lambda,\rho,\alpha)$, $(\VV,\otimes,\lambda,\alpha)$ is a $\VV$-actegory, and it is called the actegory \emph{with the canonical self-action} (\cf \coqident{CategoryTheory.Actegories.ConstructionOfActegories}{actegory_with_canonical_self_action}).%

We furthermore consider \emph{strong monoidal functors} from monoidal category $(\CC,\otimes,I)$ to monoidal category $(\DD,\otimes',I')$. Such a functor is given by a triple $(F,\epsilon,\mu)$, where $F:\CC\to\DD$ (the \emph{underlying functor}), $\epsilon:I'\to FI$ (preservation of unit) and $\mu=(\mu_{x,y})_{x,y:\CC}$ (preservation of tensor) with $ \mu_{x,y}:Fx\otimes'Fy\to F(x\otimes y)$.
Here, we assume $\mu$ to be a natural transformation, and $\epsilon$ and $\mu$ to be isomorphisms (so as to be ``strong''), as well as the three well-known (lax) laws of preservation of left and right unitality and associativity. %
By abuse of notation, we even call $F$ a strong monoidal functor when the other components are left implicit.

For the purposes of syntax representation, we only consider the lax form of morphisms between actegories, over a common monoidal category $\VV$.
Given a monoidal category $(\VV,\otimes,I)$, a lax linear functor from actegory $(\CC,\odot,\lambda,\actor)$ to actegory $(\DD,\odot',\lambda',\actor')$ is given by a pair $(F,\ell)$, where $F:\CC\to\DD$ (the \emph{underlying functor}) and $\ell=(\ell_{v,x})_{v:\VV,x:\CC}$ (the \emph{lineator}) with $\ell_{v,x}:v\odot' Fx\to F(v\odot x)$.
We require the lineator to be a natural transformation (not necessarily an isomorphism), and furthermore require it to satisfy two laws of preservation of the unitor and the actor, see \cref{fig:linearity}. Currying away the second index to $\ell$ and using that $\lambda$ and $\actor$ are isomorphisms, these laws uniquely determine $\ell_I$ and give a formula to calculate $\ell_{v\otimes w}$ from $\ell_v$ and $\ell_w$. (On this level of generality of the description, this is not different from the situation for the $\mu$ component of a strong monoidal functor.)

\begin{figure}[tb]
  \[
    \begin{tikzcd}
      I\odot' Fx \ar[rr,"\ell_{I,x}"] \ar[dr, "\lambda'_{Fx}"']& & F(I \odot x) \ar[dl,"F\lambda_x"] \\
      & Fx
    \end{tikzcd}\hspace{-3pt}
    \begin{tikzcd}[column sep=small]
      (v\otimes w)\odot'Fx \ar[rr, "\ell_{v\otimes w,x}"] \ar[d, "\actor'_{v,w,Fx}"] & & F((v\otimes w)\odot x) \ar[d, "F\actor_{v,w,x}"']\\
        v\odot'(w\odot' Fx)\ar[dr, "1_v\odot'\ell_{w,x}"'] & &F(v\odot(w\odot x)) \\
        &v\odot'F(w\odot x) \ar[ur, "\ell_{v,w\odot x}"']
      \end{tikzcd}
    \]

\caption{Preservation of the unitor and the actor in a linear functor}\label{fig:linearity}
\end{figure}

\subsection{Pointed Strength}\label{sec:ptdstrength}

Pointed strength is best understood through actegories; this is sketched in \cite[Section I.1.2]{DBLP:conf/lics/Fiore08}), and our presentation here has the same main ingredients (using reindexing and a coslice category, see below). This abstract view is helpful for creating libraries of functors with pointed strength, as will be visible in \cref{sec:msbs}.
Hur uses the notion of pointed strength extensively but only spells it out concretely \cite[Section 5.2.1]{DBLP:phd/ethos/Hur10}.

Given monoidal categories $\WW$ and $\VV$, a strong monoidal functor $F:\WW\to\VV$ and a $\VV$-actegory $(\CC,\odot,\lambda,\actor)$, one can canonically construct a $\WW$-actegory $(\CC,\odot',\lambda',\actor')$ over the same base category -- the \emph{reindexing of the $\VV$-actegory along $F$}. (It seems that it would suffice that $F$ is an oplax monoidal functor instead of a strong one.) On objects, the action $\odot'$ is constructed as $w\odot' x \eqdef Fw \odot x$. We will not spell out the details here; our formalization of this construction is given in \coqident{CategoryTheory.Actegories.ConstructionOfActegories}{reindexed_actegory}.

We need reindexed actegories for one specific situation: the actegory with the canonical pointed action. We assume a monoidal category $\VV$ with unit $I$ and construct the monoidal category of ``monoidal-pointed objects'': the underlying category is the coslice category $I/\VV$ whose objects are pairs $(v,\pointforv)$ with $v:\VV$ and $\pointforv:I\to v$ (``a point for $v$''), and the monoidal category can be easily constructed. %
Just for the record: $I^\ptd\eqdef(I,1_I)$ is the unit, and the tensor is defined on objects as $(v,\pointforv)\otimes^\ptd(w,\pointforw)\eqdef(v\otimes w,\lambda^{-1}_I\cdot(\pointforv\otimes\pointforw))$.

Given a monoidal category $\VV$, the actegory $\VV^\ptd$ that we call the \emph{actegory with the canonical pointed action of $\VV$} is obtained by reindexing: in the definition above, we take $\WW:=I/\VV$, $\VV$ as given, $F$ the forgetful functor that forgets the points (and is strong monoidal), and as $\VV$-actegory the actegory with the canonical self-action of $\VV$ introduced above. It follows that in $\VV^\ptd$, the monoidal-pointed objects of $\VV$ act on the objects of $\VV$.

Given a monoidal category $\VV$ and an endofunctor $F$ on $\VV$, a \emph{pointed tensorial strength} for $F$ is a $\theta$ so that $(F,\theta)$ is a lax endomorphism of actegory $\VV^\ptd$. In other words, $\theta$ is the lineator (following the literature, we use $\theta$ for this specific use of lineators) in the situation where source and target action are $\VV^\ptd$.

In order to allow for an easy comparison with the literature, we spell out the lineator laws for pointed tensorial strength $\theta$ (besides the requirement of naturality in both arguments): the components are $\theta_{(v,\pointforv),x}:v\otimes Fx\to F(v\otimes x)$, and the preservation rules are given in \cref{fig:lawspointedtensorialstrength}. The differences with \cite[Section 5.2.1]{DBLP:phd/ethos/Hur10} are all of presentational nature, most notably that we use left actegories while Hur has the monoidal-pointed objects as second parameter of his ``pointed strength'' $\mathsf{st}$.

\begin{figure}[tb]
  \[
    \begin{tikzcd}
      I\otimes Fv \ar[rr, "\theta_{I^\ptd,v}"] \ar[dr, "\lambda_{Fv}"'] & & F(I \otimes v) \ar[dl, "F\lambda_v"] \\
      & Fv
    \end{tikzcd}
      \begin{tikzcd}[column sep=small]
        (v\otimes w)\otimes Fx \ar[rr, "\theta_{(v,\pointforv)\otimes^\ptd (w,\pointforw),x}"] \ar[d, "\alpha_{v,w,Fx}"] & & F((v\otimes w)\otimes x) \ar[d,  "F\alpha_{v,w,x}"'] \\
        v\otimes(w\otimes Fx)\ar[dr, "1_v\otimes\theta_{(w,\pointforw),x}"'] & &F(v\otimes(w\otimes x)) \\
        &v\otimes F(w\otimes x) \ar[ur, "\theta_{(v,\pointforv),w\otimes x}"']
    \end{tikzcd}
  \]

\caption{Preservation of the unitor and the actor for pointed tensorial strength}\label{fig:lawspointedtensorialstrength}
\end{figure}

\section{Monoid Structure on Non-Wellfounded Syntax}\label{sec:syntaxinmoncat}

In this section, we construct a well-behaved substitution operation on non-wellfounded syntax.
We do so on the level of monoidal categories, using the new notion of ``monoidal heterogeneous substitution system''.
Already mentioned in the introduction, this notion will have a pivotal role in this section: as an intermediate step between a given final coalgebra and the monoid representing substitution on that final coalgebra.
The carrier of these structures is one object $t$ of the given monoidal category $\VV$, so $t$ is the representation of all terms as a whole, thus abstracting away from context/scope and typing details. (In \cref{sec:msbs}, $\VV$ will be instantiated to an endofunctor category, so that such a $t$ will be a functor whose argument is interpreted as a typing context.)

\begin{definition}[Monoidal heterogeneous substitution system, \coqident{SubstitutionSystems.GeneralizedSubstitutionSystems}{mhss}]
  \label{def:mhss}
 Let $\VV$ be a monoidal category with unit $I$, tensor $\otimes$ and right unitor $\rho$ and $H$ an endofunctor on $\VV$ with a pointed tensorial strength $\theta$ for $H$.
 We consider triples $(t,\eta,\tau)$ with $t:\VV$ (the ``terms''), $\eta:I\to t$ (representing the injection of variables into terms) and $\tau:Ht\to t$ (the $H$-algebra representing the domain-specific constructors). Hence $(t,\eta)$ is a monoidal-pointed object. $(t,\eta,\tau)$ is a \emph{monoidal heterogeneous substitution system} (MHSS) for $(\VV,H,\theta)$ if, for all $(z,e,f)$ with
  $z:\VV$, $e:I\to z$ and $f:z\to t$, there is a unique morphism $h:z\otimes t\to t$ such that the following diagram commutes:
  \begin{equation}\label{eq:mhss}
    \begin{tikzcd}[column sep=large, row sep=scriptsize]
      z\otimes I\ar[r, "1_z\otimes\eta"] \ar[dd, "\rho_z"']
      & z\otimes t \ar[dd, "h"]
      & z \otimes H t   \ar[l, "1_z\otimes\tau"'] \ar[d, "\theta_{(z,e),t}"]
      \\
      &
      & H (z\otimes t) \ar[d, "H h"]
      \\
      z \ar[r, "f"]& t
      & H t  \ar[l, "\tau"']
    \end{tikzcd}
  \end{equation}
The uniquely existing morphism $h$ is denoted as $\gst{(z,e)}{f}$.
\end{definition}
Notice that for the considered triples $(z,e,f)$, $(z,e)$ is a monoidal-pointed object. The morphism $f$ is just a $\VV$-morphism and not a ``monoidal-pointed'' morphism from $(z,e)$ to $(t,\eta)$, see~\cref{rem:hss-to-mhss}.
The left unitor and the associator of the monoidal category do not enter this definition directly but through the laws governing $\theta$ (\cf \cref{fig:lawspointedtensorialstrength}).

As seen on the right-hand side of \cref{eq:mhss}, the strength \(\theta\) is an operation that serves to prepare the arguments that are fed into the ``structurally recursive call'' \(Hh\), before applying the domain-specific constructors bundled in $\tau$.
In other words, $h$ mostly follows a homomorphic pattern, except for the rearrangement required by variable binding -- implicitly expressed in functor $H$ -- that is taken care of by $\theta$.

This notion of MHSS is not a recursion scheme specifically for the carrier of an initial algebra (for the functor $I+H-$).
The present notion of MHSS only formulates the ``desideratum'', not sufficient conditions for its fulfillment.
What makes MHSS suitable for dealing with coinductive syntax as well
(when $t,\eta,\tau$ come from a final coalgebra) is the deliberate restriction of the target type of $h$ to $t$.
(This is already part of the notion of heterogeneous substitution system (HSS) \cite{MatthesUustaluTCS}, see~\cref{rem:hss-to-mhss}).
There is also the restriction to $\tau$ as the $H$-algebra in the arrow on the bottom of the diagram, hence
  the limitation to a notion of substitution and not some general recursive pattern.

\begin{remark}\label{rem:hss-to-mhss}
\emph{Monoidal} heterogeneous substitution systems ``almost'' generalize the notion of HSS of \cite{MatthesUustaluTCS} from the specific situation where an endofunctor category $[\CC,\CC]$ is considered to the (unrestricted) monoidal category $\VV$.
The ingredients and stipulations of \cite[Definition 5]{MatthesUustaluTCS} are an instance of our notion of MHSS as soon as $\CC$ has binary coproducts -- so as to be able to speak about an $(\mathsf{Id}+H{-})$-algebra -- modulo the following:
\begin{enumerate}%
\item The order of the arguments of strength $\theta$ is inverted. %
\item We consider all $\VV$-morphisms $f$ and not only morphisms $f$ between the monoidal-pointed objects $(z,e)$ and $(t,\eta)$, satisfying $\eta=e\cdot f$.
  In our diagram, that is \cref{eq:mhss}, $f$ is just the $\VV$-morphism. However, in the diagram in \cite[Definition 5]{MatthesUustaluTCS}, $f$ is written although $Uf$ is meant, with $U$ the forgetful functor from pointed endofunctors to endofunctors forgetting the points.\footnote{This second difference appears to be a
    \emph{conceptual} simplification, and has been formalized in 2022 for endofunctor categories as a ``simplified notion of HSS'', serving as a test bed for our definition of MHSS, \cf \coqfile{}{SubstitutionSystems.SimplifiedHSS.SubstitutionSystems}.}
\end{enumerate}
\end{remark}

For the representation of substitution for non-wellfounded syntax, we will need abstract counterparts to the construction of a monad out of a HSS \cite[Theorem 10]{MatthesUustaluTCS} and the construction of a HSS from a final coalgebra \cite[Theorem 17]{MatthesUustaluTCS}.
It is fair to say that these results carry over to MHSS without difficulty.
We sketch the counterpart to the former construction in \cref{sec:mhsstomonoid} and detail a different path to obtaining the latter in \cref{sec:finalcoalgtomhss}.

\subsection{Construction of a Monoid From a MHSS}\label{sec:mhsstomonoid}

Let $(\VV,\otimes,I,\lambda,\rho,\alpha)$ be a monoidal category.
A $\VV$-monoid is given by a triple $(v,\eta,\mu)$ where $v:\VV$, $\eta:I\to v$ (the ``unit'' of the monoid) and $\mu:v\otimes v\to v$ (monoid ``multiplication''), such that the left and right unit laws and the associative law hold.
We recall the laws in \fscdarxiv{\cref{sec:appbackground} (only in \cite{DBLP:journals/corr/abs-2308-05485})}{the appendix in \cref{fig:monoid}}. %

Let furthermore $H$ be an endofunctor on $\VV$ with a pointed tensorial strength $\theta$ for $H$. An $(H,\theta)$-monoid \cite{DBLP:conf/lics/FiorePT99}
is a quadruple $(v,\eta,\mu,\tau)$ with $(v,\eta,\mu)$ a $\VV$-monoid and $\tau:Hv\to v$ (thus $(v,\tau)$ is an $H$-algebra), such that the following diagram commutes:
\begin{equation}\label{eq:hmonoid}
  \begin{tikzcd}[column sep=3em]
    v\otimes Hv\ar[r,"{\theta_{(v,\eta),v}}"] \ar[d,"1_v\otimes \tau"'] & H(v\otimes v)\ar[r,"H\mu"]& Hv\ar[d,"\tau"]\\
    v\otimes v\ar[rr,"\mu"] && v
  \end{tikzcd}
\end{equation}
The condition expressed by the diagram is the starting point for the parameterization process that ends in the definition of MHSS.

\begin{theorem}[Construction of a monoid from MHSS, \coqident{SubstitutionSystems.GeneralizedSubstitutionSystems}{mhss_monoid}, \coqident{SubstitutionSystems.SigmaMonoids}{mhss_to_sigma_monoid}]\label{thm:sigmamonoidfrommhss}
  We assume the parameters $\VV$, $H$ and $\theta$ of a MHSS. Let $(t,\eta,\tau)$ be a MHSS, and let $\mu:=\gst{(t,\eta)}{1_t}$ the uniquely existing morphism for $(t,\eta,1_t)$. Then $(t,\eta,\mu,\tau)$ is an $(H,\theta)$-monoid.
\end{theorem}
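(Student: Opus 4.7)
The plan is to exploit the uniqueness clause of the MHSS definition. Of the four laws to verify for the $(H,\theta)$-monoid $(t,\eta,\mu,\tau)$---left and right unit, associativity, and the compatibility square \cref{eq:hmonoid}---two fall out immediately from merely unfolding $\mu = \gst{(t,\eta)}{1_t}$. The left square of \cref{eq:mhss} instantiated at $(z,e,f) := (t,\eta,1_t)$ reads $(1_t \otimes \eta) \cdot \mu = \rho_t \cdot 1_t = \rho_t$, which is the right unit law of the monoid; the right square is literally the diagram \cref{eq:hmonoid}.

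For the left unit law $(\eta \otimes 1_t) \cdot \mu = \lambda_t$, I would exhibit both sides as solutions of the MHSS equations for the triple $(I, 1_I, \eta)$, so that uniqueness of $\gst{(I,1_I)}{\eta}$ collapses them. That $\lambda_t$ is a solution is routine: for the left square one uses naturality of $\lambda$ and the classical identity $\lambda_I = \rho_I$; for the right square one uses the unitor-preservation law of $\theta$ (left diagram of \cref{fig:lawspointedtensorialstrength}), which gives $\theta_{(I,1_I),t} \cdot H\lambda_t = \lambda_{Ht}$. That $(\eta \otimes 1_t) \cdot \mu$ is also a solution is a short calculation: bifunctoriality of $\otimes$ rewrites $(1_I \otimes \eta)\cdot(\eta \otimes 1_t)$ as $(\eta \otimes 1_I)\cdot(1_t \otimes \eta)$, then the already-established right unit law and naturality of $\rho$ handle the left square; for the right square one invokes \cref{eq:hmonoid} and naturality of $\theta$ in its first argument with respect to $\eta : (I, 1_I) \to (t, \eta)$ in $I/\VV$ (the equation $1_I \cdot \eta = \eta$ makes $\eta$ a morphism of monoidal-pointed objects), yielding $(\eta \otimes 1_{Ht}) \cdot \theta_{(t,\eta),t} = \theta_{(I,1_I),t} \cdot H(\eta \otimes 1_t)$.

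The associativity law $(\mu \otimes 1_t) \cdot \mu = \alpha_{t,t,t} \cdot (1_t \otimes \mu) \cdot \mu$ is handled by the same uniqueness strategy, this time identifying both sides with $\gst{(t \otimes t,\, e)}{\mu}$, where $e := \lambda^{-1}_I \cdot (\eta \otimes \eta)$ is the point on $t \otimes t$ inherited from $(t,\eta) \otimes^{\ptd} (t,\eta)$. The left-square checks rely on bifunctoriality, naturality of the unitors and associator, and the right unit law just proved. The right-square checks are more delicate: one must decompose $\theta_{(t \otimes t,\, e),\, t}$ via the actor-preservation law of $\theta$ (right diagram of \cref{fig:lawspointedtensorialstrength}) into a composition involving $\theta_{(t,\eta),t}$ and $\theta_{(t,\eta),\, t \otimes t}$, and then invoke the right square of \cref{eq:mhss} for $\mu$ twice, at $t$ and at $t \otimes t$, in order to rewrite the resulting expression.

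The main obstacle is precisely this last step in the associativity proof: the interleaving of the actor-preservation law for $\theta$ with two nested applications of the MHSS equation for $\mu$ produces a chain of rewrites that must be orchestrated carefully. The overall pattern is entirely parallel to the construction of a monad from a HSS in \cite[Theorem 10]{MatthesUustaluTCS}, lifted here from endofunctor categories to an arbitrary monoidal category $\VV$ and adapted to the MHSS formulation (in particular to the fact that $f$ is an arbitrary $\VV$-morphism rather than a morphism of monoidal-pointed objects, \cf \cref{rem:hss-to-mhss}).
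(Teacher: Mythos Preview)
Your approach is essentially the one the paper takes: the $(H,\theta)$-monoid compatibility and the right unit law fall out of the defining diagram of $\mu$, the left unit law is obtained by identifying both sides with $\gst{I^\ptd}{\eta}$, and associativity by identifying both sides with $\gst{(t,\eta)\otimes^\ptd(t,\eta)}{\mu}$. This mirrors \cite[Theorem~10]{MatthesUustaluTCS}, as you note.

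One detail you underplay in the associativity step: your description of the right-square verification (``decompose $\theta_{(t\otimes t,e),t}$ via the actor-preservation law and invoke the right square for $\mu$ twice'') really only treats the side $\alpha_{t,t,t}\cdot(1_t\otimes\mu)\cdot\mu$. For the other side, $(\mu\otimes 1_t)\cdot\mu$, the rewriting requires instead naturality of $\theta$ in its \emph{first} argument along $\mu:(t\otimes t,e)\to(t,\eta)$, and for this $\mu$ must be a morphism of monoidal-pointed objects. The paper singles this out explicitly: one checks $e\cdot\mu=\eta$ using the right unit law just established, naturality of $\rho$, and the identity $\lambda_I=\rho_I$ (so that identity reappears here, not only in the left unit law where you invoked it). With this point filled in, your sketch and the paper's agree.
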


This generalizes \cite[Proposition 3.5]{DBLP:conf/lics/FiorePT99} from the construction of an initial $(H,\theta)$-monoid (under extra sufficient conditions) to the construction of an $(H,\theta)$-monoid from a suitable $H$-algebra (without further conditions).
On the other hand, it lifts \cite[Theorem 10]{MatthesUustaluTCS} from the abstraction level of endofunctor categories to that of monoidal categories.
For the proof of \cref{thm:sigmamonoidfrommhss}, we can precisely follow the organization of the proof of \cite[Theorem 10]{MatthesUustaluTCS}. The absence of the pointedness requirement for $f$ in the definition of MHSS gives rise to an inessential simplification.
The defining diagram of an $(H,\theta)$-monoid is just the right-hand side of \cref{eq:mhss} for the instance used to define $\mu$.
So, we have to establish the monoid laws, for which we only give an overview (\cf \coqident{SubstitutionSystems.GeneralizedSubstitutionSystems}{mhss_monoid} for the formalization).
We define $\mu^{(0)}\eqdef\eta:I\to t$ and $\mu^{(1)}\eqdef\gst{I^\ptd}{\eta}:I\otimes t\to t$.
The morphism $\lambda_t$ satisfies its defining diagram, hence $\mu^{(1)}=\lambda_t$ by uniqueness.
The right unit law of a monoid is just the left-hand side of the defining diagram of $\mu$.
The left unit law of a monoid asks for $\lambda_v$ to be equal to a morphism $m$; since $\lambda_t=\mu^{(1)}$, it suffices to show that $m$ satisfies the defining diagram of $\mu^{(1)}$.
Now, define $\mu^{(2)}\eqdef\mu$. The morphism $\mu^{(2)}$ is even a morphism of the monoidal-pointed objects $(t,\eta)\tensor^\ptd(t,\eta)$ and $(t,\eta)$; the proof uses that $\lambda_I=\rho_I$, which holds generally.
Define $\mu^{(3)}\eqdef\gst{(t,\eta)\tensor^\ptd(t,\eta)}{\mu^{(2)}}:(t\otimes t)\otimes t\to t$.
The associative law of a monoid can now be dealt with by showing that both sides of that equation satisfy the defining diagram of $\mu^{(3)}$ and are hence equal by uniqueness.
The reasoning in both cases is just the monoidal generalization of the first two items of \cite[p.~168]{MatthesUustaluTCS}.

\subsection{Construction of a MHSS From a Final Coalgebra}\label{sec:finalcoalgtomhss}

In this section, we assume the parameters $\VV$, $H$ and $\theta$ of a MHSS. We require binary coproducts in the underlying category of $\VV$ (and use $\tinl$ and $\tinr$ without indices for the left and right injection into the coproduct).
We also assume a final coalgebra $(t,\tout)$ of the functor $(I+H{-})$, \ie, $t:\VV$ and $\tout:t\to I+Ht$.
By Lambek's theorem, $\tout$ is an isomorphism, with inverse $\tout^{-1} : I+Ht \to t$ that can be written as $\tout^{-1} = [\eta,\tau]$ with $\eta:I\to t$ and $\tau:Ht\to t$.
We also require that binary coproducts distribute over the tensor of $\VV$ in its second argument; this means that, for all $v$, $w_1$ and $w_2$, the morphism $[1_v\otimes\tinl,1_v\otimes\tinr]$ from $v\otimes w_1+v\otimes w_2$ to $v\otimes(w_1+w_2)$ has an inverse.
We call that inverse $\delta$ for ``distributor'', without specifying its arguments. %
\begin{theorem}[Construction of MHSS from final coalgebra, \coqident{SubstitutionSystems.ConstructionOfGHSS}{final_coalg_to_mhss_alt}]\label{thm:mhssfromfinal}
  The triple $(t,\eta,\tau)$ is a MHSS for $(\VV,H,\theta)$.
\end{theorem}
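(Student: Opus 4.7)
The plan is to exploit that $(t,\tout)$ being a final coalgebra of $I+H{-}$ endows the $H$-algebra $\tau:Ht\to t$ with the structure of a completely iterative algebra (CIA): every ``guarded'' equation morphism $\phi:X\to HX+t$ admits a unique $\phi^\ddagger:X\to t$ satisfying $\phi^\ddagger=[\tau,1_t]\comp(H\phi^\ddagger+1_t)\comp\phi$. This is the simplifying observation credited to Henning Basold in the acknowledgement. Given a triple $(z,e,f)$ as in \cref{def:mhss}, the strategy is to obtain the required $h:z\otimes t\to t$ as the unique solution of a single, cleanly chosen such $\phi$.

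Concretely, I would set
\[
  \phi \eqdef [\,\tinr\comp f\comp\rho_z\,,\,\tinl\comp\theta_{(z,e),t}\,]\comp\delta\comp(1_z\otimes\tout)
  \;:\; z\otimes t \longrightarrow H(z\otimes t)+t,
\]
where $\delta$ is the distributor assumed at the start of the section. This map is guarded in precisely the sense required by the CIA property: the only occurrence of the recursion variable $z\otimes t$ sits underneath $H$, because the pointed strength $\theta_{(z,e),t}$ routes $z\otimes Ht$ into $H(z\otimes t)$, while the non-recursive branch $z\otimes I$ is sent directly to $t$ via $f\comp\rho_z$. Define $h\eqdef\phi^\ddagger$.

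Checking \cref{eq:mhss} then becomes a short calculation that unfolds the fixed-point equation using two facts: $\tout\comp\eta=\tinl$ and $\tout\comp\tau=\tinr$ (both consequences of $\tout^{-1}=[\eta,\tau]$), together with $\delta\comp(1_z\otimes\tinl)=\tinl$ and $\delta\comp(1_z\otimes\tinr)=\tinr$ (since $\delta^{-1}=[1_z\otimes\tinl,1_z\otimes\tinr]$). Precomposing the defining equation of $h$ with $1_z\otimes\eta$ collapses to $h\comp(1_z\otimes\eta)=f\comp\rho_z$, and precomposition with $1_z\otimes\tau$ collapses to $h\comp(1_z\otimes\tau)=\tau\comp Hh\comp\theta_{(z,e),t}$, which are exactly the two commutativities of \cref{eq:mhss}. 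For uniqueness, any $h'$ satisfying \cref{eq:mhss} is, by the same identities read backwards, a solution of $\phi$ and hence equals $h$ by the CIA uniqueness clause.

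The main obstacle is justifying that $(t,\tout)$ really induces a CIA structure on $\tau$ at this level of generality (an abstract monoidal category $\VV$ with binary coproducts distributing on the right over the tensor). The argument is the usual one from the theory of completely iterative monads, but it must be run in the present monoidal setting rather than the more familiar endofunctor setting; this is precisely where the distributor $\delta$ and the assumed binary coproducts enter. A CIA-free alternative would be to equip $z\otimes t$ with an $(I+H{-})$-coalgebra structure and take the induced coalgebra morphism into $(t,\tout)$, but that route forces an awkward use of the point $e:I\to z$ to re-embed $t$ into $z\otimes t$ in the variable branch and loses the clean separation between guarded and non-guarded components that $\phi$ offers.
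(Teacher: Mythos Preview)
Your proposal is correct and follows essentially the paper's own route: both exhibit the required $h$ as the unique solution of a flat equation morphism in a completely iterative algebra derived from the final coalgebra. The only packaging difference is that the paper applies the cia scheme to the functor $I+H{-}$ with algebra $\tout^{-1}=[\eta,\tau]$ (so the equation morphism is $\teqm:z\otimes t\to(I+H(z\otimes t))+t$), whereas you apply it directly to $H$ with algebra $\tau$; your $\phi$ is exactly the paper's $\teqm$ with the never-used $I$-summand suppressed, so this is a mild streamlining rather than a genuinely different argument. (One small correction: the cia property itself needs only binary coproducts in the underlying category, not the monoidal structure or the distributor---those are used only to \emph{define} $\phi$---so your ``main obstacle'' is smaller than you suggest.)
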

For clarity, we deviate from the proof of \cite[Theorem 17]{MatthesUustaluTCS} for HSS, which uses primitive corecursion.
We instead use that $\tout^{-1}$ is a completely iterative algebra (abbreviated as ``cia''), which follows from $\tout$ being a final coalgebra \cite{DBLP:journals/iandc/Milius05}.
In particular, we will only use the definition of cia and not more general corecursion schemes implied by that property.

Given an endofunctor $F$ on a category $\CC$ with binary coproducts, an $F$-algebra $(c,\alpha)$ is called a cia iff for every $x:\CC$ and every morphism $e:x\to Fx+c$ (``a flat equation morphism''), there is a unique morphism $h:x\to c$ that is a ``solution'' of $e$ in $c$ in the sense that the following diagramm commutes:
\[
  \begin{tikzcd}[column sep=5em]
 x \ar[r, "e"] \ar[d, "h"'] &Fx+c\ar[d, "Fh+1_c"]\\
 c&Fc +c \ar[l, "{[\alpha,1_c]}"']
\end{tikzcd}
\]
This generalizes the intuition when $\CC$ is $\SET$: the elements of $x$ are the unknowns, and $e$ either requires a structure in $F$ over the unknowns or directly assigns a value in $c$. A morphism $h$ is a solution if, in the first case, applying $h$ inside the structure and then assembling the structure through $\alpha$ yields the value of $h$ again.

To prove \cref{thm:mhssfromfinal}, we apply the cia scheme for $F:=I+H{-}$ and $\alpha:=\tout^{-1}$.
Given a triple $(z,e,f)$, the following are equivalent:
\begin{enumerate}
\item $h:z\otimes t\to t$ satisfies the defining diagram for $\gst{(z,e)}{f}$ 
\item  $h$ satisfies the defining diagram of a solution for the flat equation morphism $\teqm:z\otimes t\to(I+H{-})(z\otimes t)+t$ defined in \cref{fig:eqmformhss}.
\end{enumerate}
\begin{figure}[tb]
\[
  \begin{tikzcd}[column sep=5em, row sep=scriptsize]
    z\otimes t \ar[r, "\teqm"] \ar[d, "1_z\otimes \tout"'] &(I+H{-})(z\otimes t)+t\\
    z\otimes(I+Ht)\ar[d, "\delta"']\\
    z\otimes I+z\otimes Ht\ar[r, "{\rho_z+\theta_{(z,e),t}}"]&z+H(z\otimes t)\ar[uu, "{[f\cdot\tinr,\tinr\cdot\tinl]}"']
\end{tikzcd}
\]
\caption{Definition of $\teqm$ as composition of four morphisms}\label{fig:eqmformhss}
\end{figure}
The details are found in \cref{sec:proofmhssfromfinal}.
In a nutshell, the two defining diagrams can be massaged so that the equivalence can be seen for each path in the diagrams individually.

On a general note, there is a whole arsenal of categorical corecursion schemes.
For MHSS (and hence for the representation of substitution in the section to come), we picked the method of completely iterative algebras.
Working with these tools from category theory is an alternative to intuitive ``guarded'' definitions and reasoning with observation depths. This alternative is suitable for formalization, for which the present paper is further evidence.

\section{Non-Wellfounded Syntax for Multi-Sorted Binding Signatures}\label{sec:msbs}

In this section, we start from the notion of multi-sorted binding signature (reviewed in \cref{sec:msbsdef}).
Exploiting the high-level results of \cref{sec:syntaxinmoncat} and thus showing their usefulness, we are going to construct the non-wellfounded syntax specified by such a signature, together with a well-behaved -- monadic -- substitution operation on the terms of that syntax.

Our work builds upon previous work \cite{CPP22} involving two of the present authors.
There, categorical semantics of languages of \emph{well}founded terms is developed,
and a construction of the syntax generated by a multi-sorted binding signature is given.
In this section, 
we construct \emph{non}-wellfounded syntax based on that very same notion of multi-sorted binding signatures.
Given such a signature, the existence of the generated syntax is guaranteed by $\omega$-continuity of the associated signature functor -- while, for wellfounded terms, \cite{CPP22} establishes $\omega$-cocontinuity to construct the syntax.
For a modular proof of $\omega$-continuity, we decompose the construction of the associated signature functor slightly differently. Extensionally, we arrive at the same functor, but the formalization of that proof is somewhat intricate.
We therefore suggest the new construction of the signature functor as the one to work with also in the wellfounded case.
This makes good sense if one wants to consider the embedding of wellfounded syntax into non-wellfounded syntax for the same multi-sorted binding signature, and it is doable since we also formalized a proof of $\omega$-cocontinuity.
But there is also an advantage on the conceptual side: the building blocks of the signature functor are all endofunctors, unlike previously \cite{CPP22}.
A second difference with the previous work is that, for the strength construction, we systematically refer to results that reside on the abstract level of monoidal categories.
(Aspects of steps \ref{item:mssig}, \ref{item:sig-functor}, and \ref{item:sig-strength} below are also described in \cite[Section~2]{CPP22}; we discuss them here again for the sake of being self-contained.
The items \ref{item:mhss}-\ref{item:monoidmonad} are concretized in \cref{sec:everything}.)

\begin{enumerate}
\item \label{item:mssig} We describe
simply-typed syntax with variable binding (of finitely many sorted
variables in each constructor argument) as a multi-sorted binding
signature, see \cref{sec:msbsdef}.
\item \label{item:sig-functor} Given a  multi-sorted binding signature, we construct a signature functor $H$ (deviating from \cite{CPP22} for technical reasons), see \cref{sec:signaturefunctor}.
\item We prove $\omega$-continuity of $(\Id{}+H{-})$ and construct the coinductive syntax as the inverse of a final
  coalgebra thereof, see \cref{sec:omegacont}.
\item \label{item:sig-strength} We construct a ``lax lineator'' between actions expressing pointed tensorial strength of $H$, see \cref{sec:everything} including a discussion of the case of simply-typed $\lambda$-calculus and references to the appendix for more details on the general case\fscdarxiv{ (only in \cite{DBLP:journals/corr/abs-2308-05485})}{}.
\item \label{item:mhss} We construct a MHSS (\cref{def:mhss}) for $(H,\theta)$ by applying \cref{thm:mhssfromfinal}.
\item We construct an $(H,\theta)$-monoid by applying \cref{thm:sigmamonoidfrommhss}.
\item \label{item:monoidmonad} Finally we interpret the obtained monoid as monad (hence as monadic substitution) since, during the entire section, we are instantiating the monoidal category to the endofunctors.
\end{enumerate}

\subsection{Multi-Sorted Binding Signatures: Motivation and Definition}\label{sec:msbsdef}

We want to construct syntax of non-wellfounded terms that feature variable binding and have a simple notion of typing. %
Such type systems can be specified using ``multi-sorted binding signatures''; this notion was used, in particular, in \cite{CPP22}, but appears in almost any literature about initial semantics for multi-sorted syntax.
The prime example is simply-typed $\lambda$-calculus (STLC), whose extension to non-wellfounded well-typed terms is an instance of our construction.
We study this example in some detail in \cref{ex:stlc} before reviewing multi-sorted binding signatures in \cref{def:mssigs}.

\begin{example}[Non-wellfounded simply-typed lambda-calculus]\label{ex:stlc}
We are now rephrasing \cite[Example 2.2 and Example 2.10]{CPP22}.
We assume the types of simply-typed $\lambda$-calculus to form a small set $\sort$ that is closed under a binary operation $\Rightarrow : \sort \to \sort \to \sort$. The elements of $\sort$ are called \emph{sorts}, so as to distinguish them from the types of our ambient type theory.
We model syntax over a base category $\CC$ (with initial object $\bot$, terminal object $\top$, and binary products and binary coproducts), not necessarily the category $\set$; however, we motivate the notions for the special case where $\CC$ is $\set$.
Let $\ccsort$ be the functor category $[\sort,\CC]$ where $\sort$ is viewed as a discrete category. In the case when $\CC$ is $\set$, objects of this category are simply functions $\xi : \sort \to \set$, and we generally use letter $\xi$ to indicate objects of $\ccsort$. They represent the typing contexts since $\xi s$ represents the totality of variables of sort $s$.

For the instantiation of \cref{sec:syntaxinmoncat}, we take $\VV$ as the monoidal category of endofunctors of $\ccsort$ -- with the tensor operation $X\otimes Y\eqdef X\cdot Y$ in diagrammatic order. In \cref{def:mhss}, we are looking for one object $T$ of $\VV$ (\ie, $T:[\ccsort,\ccsort]$) as representation of all the wellfounded and non-wellfounded terms, here of simply-typed $\lambda$-calculus. On objects, $T$ assigns to $\xi : \ccsort$ and $s : \sort$ the object $T~\xi~s$ of $\CC$, a representation of all the wellfounded and non-wellfounded terms that have sort $s$ in the typing context $\xi$. The functor $H$ of \cref{sec:syntaxinmoncat} prepares for the construction of $T$ as a fixed point. Instead of only considering the ``solution'' $T$ as an argument to $H$, we have to abstract over an arbitrary $X:\VV$ as an object argument to $H$.
We would like to take $H$ as the pointwise coproduct of one summand for application and one for abstraction, for each pair $(s,t)$ of sorts that parameterize the respective typing rules, \ie $H\eqdef \sum_{s,t:\sort}(\app_{s,t}+\lam_{s,t})$.
Here the summands have to be endofunctors on $[\ccsort,\ccsort]$, and we only give the definition for objects (in all arguments) -- where the defining equation is between objects of $\CC$:
\begin{align*}
    \app_{s,t}~X~\xi~u &\eqdef \left\{
                         \begin{array}{l@{\quad\!}l}
                           X~\xi~(s \Rightarrow t) \times X~\xi~s&\mbox{if $u=t$}\\
                            \bot &\mbox{else}\\
                         \end{array}\right.
                         \enspace, \\
  \lam_{s,t}~X~\xi~u &\eqdef \left\{
                         \begin{array}{l@{\quad\!}l}
                           X~\xi'~t&\mbox{if $u=(s \Rightarrow t)$, with $\xi's\eqdef\top+\xi s$ and $\xi' u'=\xi u'$ for $u'\neq s$}\\
                            \bot &\mbox{else}\\
                         \end{array}\right.\\[-3ex]
\end{align*}
These summands represent all well-sorted applications and $\lambda$-abstractions made from the material in the yet arbitrary object $X$ of $\VV$,
while the variables are dealt with separately from $H$ through the unit of $\VV$ (in our case the identity functor) in \cref{sec:syntaxinmoncat}. 
 The case distinctions on equality of sorts in this motivating example can be avoided following \cite{CPP22}, and we will do so in the generic construction in \cref{sec:signaturefunctor}.
 \end{example}

 We fix a small set $\sort$ representing the sorts.

\begin{definition}[\protect{\cite[Definition 2.1]{CPP22}}, \coqident{SubstitutionSystems.MultiSortedBindingSig}{MultiSortedSig}]\label{def:mssigs}
  A \emph{multi-sorted binding signature} is given by a small set $I$ together with an arity function $\arity : I \to (\sort^* \times \sort)^* \times \sort$.
\end{definition}
Here, we write $A^*$ for the set of finite lists formed from elements of $A$.
The intuition is as follows: for any $i : I$, $\arity(i)$ is the signature of a term constructor.
The second component of $\arity(i)$ is the sort of the constructed term.
The first component is a list of signatures of the arguments of that constructor.
Each such signature is an element of $\sort^*\times\sort$, describing the sorts of all the (anonymous) variables bound by that argument, together with the sort of the argument itself.
\cite{CPP22} makes no claim on originality of that definition, see the discussion there.
It should be stressed that, while $\sort$ and $I$ can be infinite sets, each term constructor described by an $\arity(i)$ only has finitely many arguments.
Non-wellfounded syntax with these constructors is therefore still finitary in the sense that terms, when viewed as trees, are finitely branching.

\begin{example}[\protect{\cite[Example 2.2]{CPP22}}, \coqident{SubstitutionSystems.STLC_alt}{STLC_Sig}]\label{example:stlc}
  Assume that $\sort$ is closed under a binary operation $\Rightarrow$.
  We put into $I$ the sort parameters of the typing rules of the term constructors of STLC.
  Thus, $I$ is taken to be $(\sort \times \sort) + (\sort \times \sort)$.
  The left summand pertains to the application operation while the right summand describes $\lambda$-abstraction:
   \[\arity(\tinl\langle s,t\rangle)\eqdef \big\langle[\langle[],s\Rightarrow t\rangle,\langle[],s\rangle],t\big\rangle\qquad\qquad
    \arity(\tinr\langle s,t\rangle)\eqdef \big\langle[\langle[s],t\rangle],s\Rightarrow t\big\rangle\]
\end{example}

\begin{example}[\coqident{SubstitutionSystems.UntypedForests}{UntypedForest_Sig}]\label{example:untypedforests}
  We model the grammar of the untyped version of the forests described in \cref{sec:appscenario} as a multi-sorted binding signature.
  Let $\sort_0:=\{\sortv,\sortt,\sorte\}$ be a three-element set, having sorts for the three syntactic categories of term variables (\sortv), terms (\sortt) and elimination alternatives (\sorte).
  The first sort seems unavoidable since the elimination alternative $x \tuple{N_1,\ldots,N_k}$ only allows term variables in the head position, not arbitrary terms.
  The index set $I$ represents the different forms of expressions that are parameterized in the elements of the syntactic categories: one for $\lambda$-abstraction, one for each index $n$ for summation, one for each index $k$ for tupling, hence we set $I:=(1+\nat)+\nat$.
We define  \(\arity(\tinl(\tinl *))\eqdef\big\langle[\langle[\sortv],\sortt\rangle],\sortt\big\rangle\), a simplified version of the second case for STLC: a term variable is being bound in a term, yielding a term. The other forms of expressions do not feature variable binding:
\[\arity(\tinl(\tinr\,n))\eqdef\big\langle[\underbrace{\langle[],\sorte\rangle,\ldots,\langle[],\sorte\rangle}_{n}],\sortt\big\rangle\qquad\qquad\arity(\tinr\,k)\eqdef\big\langle[\langle[],\sortv\rangle,\underbrace{\langle[],\sortt\rangle,\ldots,\langle[],\sortt\rangle}_{k}],\sorte\big\rangle\]
None of the arities have $\sortv$ as second component, hence term variables will only come from a given context.
Using the pipeline of \cref{sec:everything}, we can represent these untyped forests as an object $T$ of $[\CC^{\sort_0},\CC^{\sort_0}]$, analogously to \cref{ex:stlc}.
Again analogously to that example, objects $\xi$ of $\CC^{\sort_0}$ represent contexts; in this untyped version they correspond just to a choice of names for all occurring variables.
Then, because of the absence of $\sortv$ as second component of our arities, we should have that $T\xi\sortv$ and $\xi\sortv$ are isomorphic for any $\xi:\CC^{\sort_0}$.
In our representation, we are interested in those $\xi:\CC^{\sort_0}$ for which $\xi\sortt$ and $\xi\sorte$ are the initial object of $\CC$ -- empty sets in the case of $\CC = \SET$ -- so that $\xi$ only provides names for term variables.
For those $\xi$, the untyped forests in the terms category and elimination alternatives category, of the grammar in \cref{sec:appscenario}, are represented as $T\xi\sortt$ and $T\xi\sorte$, respectively.
\end{example}

\begin{example}[\coqident{SubstitutionSystems.Forests}{Forest_Sig}]\label{example:typedforests}
  The typed forests of \cref{sec:appscenario} have as the set of sorts the set $\sort\times\sort_0$, with $\sort$ and $\sort_0$ from the two previous examples.
  Moreover, since atomic types play a specific role in the typing rules in \cref{fig:typingforests}, we have to assume a set $\atom$ of atoms and an operation $\atotype:\atom\to\sort$ (which should be thought of as an inclusion).
  The typing rule for tuples is additionally (as compared to the raw syntax) parameterized by a list of $k$ elements of $\sort$ and one element of $\atom$.
  Accordingly, the index set for this multi-sorted binding signature is
  $I:=(\sort\times\sort + \atom\times\nat)+\sort^*\times\atom$, using the set $\sort^*$ of finite $\sort$-lists introduced above. We define \(\arity(\tinl(\tinl\langle s,t\rangle))\eqdef\big\langle[\langle[\langle s,\sortv\rangle],\langle t,\sortt\rangle\rangle],\langle s\Rightarrow t,\sortt\rangle\big\rangle\), which combines the second case of \cref{example:stlc} and the first case of \cref{example:untypedforests}. The other definitions are as follows:
  \begin{align*}
    \arity(\tinl(\tinr\,\langle p,n\rangle))&\eqdef \big\langle[\underbrace{\langle[],\langle \atotype\,p,\sorte\rangle\rangle,\ldots,\langle[],\langle \atotype\,p,\sorte\rangle\rangle}_{n}],\langle \atotype\,p,\sortt\rangle\big\rangle\\
    \arity(\tinr\,\langle[B_1,\ldots,B_k],p\rangle)&\eqdef\big\langle[\langle[],\langle B,\sortv\rangle\rangle,\langle[],\langle B_1,\sortt\rangle\rangle,\ldots,\langle[],\langle B_k,\sortt\rangle\rangle],\langle\atotype\,p,\sorte\rangle\big\rangle\enspace,
  \end{align*}
  with $B:=B_1\Rightarrow\ldots\Rightarrow B_k\Rightarrow \atotype\,p$, parenthesized to the right.
\end{example}

A multi-sorted binding signature is just simple syntactic data (of a signature), so rather the description of a task to define the intended syntax -- which in our case will include non-wellfounded terms.
In the next section we discuss how to transform such a signature into a more ``semantic'' kind of signature: a functor $H$ such that the semantics of the signature is given by \((\Id{}+H{-})\)-algebras.
While we did this ``by hand'' for STLC in \cref{ex:stlc} (but limited the description just to object arguments), it should be clear that an automatic generation for more involved grammars such as \cref{example:typedforests} would be desirable.

\subsection{(Modified) Signature Functor for Multi-Sorted Binding Signatures}\label{sec:signaturefunctor}

Here, we associate to any multi-sorted binding signature $(I,\arity)$ a suitable functor \(H\), such that the non-wellfounded syntax generated by $(I,\arity)$ is, in particular, an \((\Id{}+H{-})\)-(co)algebra.
\Cref{def:sorted_option_functor,def:option-list,def:proj,def:hat} provide the building blocks for building $H$ modularly from basic constructions, following the combinatorial structure of $(I,\arity)$ as a family of pairs containing lists.

We now use $\ccsort$ generally for the functor category $[\sort,\CC]$, not just for the STLC example in \cref{sec:msbsdef}.
One can still think of $\CC$ as being $\set$, but we keep the category $\CC$ abstract and collect requirements on $\CC$ on the way -- that are all fulfilled by $\set$.
As mentioned in \cref{sec:msbsdef}, we instantiate $\VV$ of \cref{sec:syntaxinmoncat} with the endofunctors on $\ccsort$ and now have to determine the endofunctor $H$ on $[\ccsort,\ccsort]$.

In order to do this, we assume that $\CC$ has a terminal object $\top$, binary products, and set-indexed coproducts (including initial object $\bot$ and binary coproducts).

\begin{definition}[\protect{\cite[Definition 2.3]{CPP22}}, \coqident{SubstitutionSystems.MultiSorted_alt}{sorted_option_functor}]\label{def:sorted_option_functor}
  Let $s$ be a sort. The \emph{sorted option functor}
  $\option_s : \ccsort \to \ccsort$ is defined (on objects) as
   \( \option_s~\xi~t \eqdef  \coprod_{(s = t)} \top\, + \, \xi~t \enspace\).
\end{definition}
In this definition, we form a coproduct in $\CC$ of $\top$ over the type of proofs that $s = t$; \ie, we form a subsingleton.
We thus avoid the use of case distinction: %
$\option_s~\xi$ is an equivalent replacement for $\xi'$ in the definition of $\lam$ above \cite[Remark 2.4, Remark 2.8]{CPP22}.

\begin{definition}[\protect{\cite[Definition 2.5]{CPP22}}, \coqident{SubstitutionSystems.MultiSorted_alt}{option_list}]\label{def:option-list}
  Given a non-empty list of sorts $\ell \convert [s_1, \ldots, s_n]$, $\option^*~\ell : \ccsort \to \ccsort$ is defined as
    \(\option^*~\ell \eqdef \option_{s_1} \hcomp \, (\option_{s_2} \hcomp \, \ldots)\enspace\).
  For an empty list, it is $\option^*~[] \eqdef \Id{}$.
\end{definition}

\begin{definition}[\protect{\cite[Definition 2.6]{CPP22}}, \coqident{SubstitutionSystems.MultiSorted_alt}{projSortToC}]\label{def:proj}
  For any $s : \sort$ the \emph{projection functor} $\proj{s} : \ccsort \to \CC$ is defined (on objects) as:
    \(\proj{s}~\xi \eqdef \xi~s\enspace\).
\end{definition}

\begin{definition}[\protect{\cite[Definition 2.7]{CPP22}}, \coqident{SubstitutionSystems.MultiSorted_alt}{hat_functor}]\label{def:hat}
  For any $s : \sort$ we have a left adjoint %
  to $\proj{s}$, written $\hat{s} : \CC \to\ccsort$, defined on objects as
    \(\hat{s}~c~t \eqdef \coprod_{(s = t)} c\enspace\).
\end{definition}
Here, we essentially define $\hat{s}~c~s=c$ and $\hat{s}~c~t=\bot$ otherwise.
As above, we avoid the case analysis for $\app_{s,t}~X~\xi~u$ in our STLC example in \cref{sec:msbsdef}, hence do not need the matching of the constructor's target type.

We now have all the basic building blocks to associate, to a given multi-sorted binding signature $(I,\arity)$, a signature functor $H: [\ccsort,\ccsort] \to [\ccsort,\ccsort]$.
We turn to the construction of the corresponding building blocks for $H$, involving a formal argument $X : [\ccsort,\ccsort]$ to $H$.
Intuitively, $X$ is the unknown functor that, after applying our results of \cref{sec:syntaxinmoncat}, will be set to the functor representing the coinductive sorted syntax.
For the sake of motivating the modification of the definition of $H$ compared to \cite{CPP22}, we will now work top-down and use the letter $G$ with upper indices (instead of $F$ with upper indices, used there).

The final step of the construction of $H$ is unchanged from \cite{CPP22}.
Assume that we already have a functor $G^{(\vec a,t)}: [\ccsort,\ccsort] \to [\ccsort,\ccsort]$ for all $(\vec a,t):(\sort^* \times \sort)^* \times \sort$.
Then, for the multi-sorted binding signature $(I,\arity)$, the associated signature functor $H:[\ccsort,\ccsort] \to [\ccsort,\ccsort]$ is given by the (pointwise) coproduct $HX\eqdef \coprod_{i~:~I} G^{\,\arity(i)}X$ (\coqident{SubstitutionSystems.ContinuitySignature.ContinuityOfMultiSortedSigToFunctor}{MultiSortedSigToFunctor'}).

The penultimate step constructs $G^{(\vec a,t)}$ (\coqident{SubstitutionSystems.ContinuitySignature.ContinuityOfMultiSortedSigToFunctor}{hat_exp_functor_list'_optimized}).
It assumes given functors $G^{(a,t)}: [\ccsort,\ccsort] \to [\ccsort,\ccsort]$ for all $(a,t):(\sort^* \times \sort)\times \sort$.
Given a non-empty list $\vec a\convert[a_1, \ldots, a_n]$, $G^{(\vec a,t)}$ is defined (on objects) as the iterated pointwise binary product $ G^{(a,t)}X\eqdef G^{(a_1,t)}X \times(G^{(a_2,t)}X\times\ldots)$.
The corner case $G^{([],t)}$ is given (maybe peculiarly) by $G^{([],t)}X\eqdef \top_{[\ccsort,\CC]} \cdot \hat t$, regardless of $X$, with $\top_\DD$ the terminal object of $\DD$.
(The terminal object in functor categories is given by constantly the terminal object of the target category.)
More precisely, $G^{([],t)}$ is the composition (in diagrammatic order) of first $\top_{[[\ccsort,\ccsort],[\ccsort,\CC]]}$ and second post-composition with $\hat t$ (which is a functor from $[\ccsort,\CC]$ to $[\ccsort,\ccsort]$).
This view will be exploited for the construction of a pointed tensorial strength for $G^{([],t)}$.

It remains to construct $G^{(a,t)}$ (\coqident{SubstitutionSystems.ContinuitySignature.ContinuityOfMultiSortedSigToFunctor}{hat_exp_functor_list'_piece}).
It is a refined version of $F^a$ in \cite{CPP22} since it takes into account the ``target'' sort $t$.
In fact $G^{(a,t)}$, corresponds to $F^a\cdot \hat t$ in terms of that paper, and so the difference between our $G^{(\vec a,t)}$ and $F^{(\vec a,t)}$ in that paper is whether the composition with $\hat t$ is on each component of the product (our solution) or only on the product itself.
That previous solution looks less convoluted but requires the consideration of non-endofunctors.

Let $a\convert(\ell,s)$ with $\ell:\sort^*$.
The object part of functor $G^{(a,t)}$ is defined as $G^{(a,t)}X\eqdef \option^*~\ell  \hcomp X \hcomp \proj{s} \hcomp \hat t$.
More precisely, $G^{(a,t)}$ is the composition (in diagrammatic order) of first the precomposition with $\option^*~\ell$ and second the postcomposition with $\proj{s} \hcomp \hat t$.
This view will be important to establish $\omega$-continuity of $G^{(a,t)}$.

The instance for STLC can be compared to our introductory example in \cref{sec:msbsdef} -- assuming decidability of the set of sorts, they coincide mathematically.

\subsection{Existence of Final Coalgebra for Functor $\Id{}+H{-}$}\label{sec:omegacont}

Given a multi-sorted binding signature, we want to apply \cref{thm:mhssfromfinal} with $\VV$ the endofunctors on $\ccsort$ and the signature functor $H$ defined in the previous section.
Its first requirement is a final coalgebra of the functor $(\Id{}+H{-})$, with $\Id{}$ the identity functor on $\ccsort$ (which is the unit of $\VV$).
We get a final coalgebra through the dual of Adámek's theorem on the existence of initial algebras for $\omega$-cocontinuous functors on $\omega$-cocomplete categories with initial object.
We now require that $\CC$ is $\omega$-complete (\ie, $\CC$ has limits of shape $0\leftarrow 1\leftarrow 2\leftarrow\cdots$), whence $\VV$ is also $\omega$-complete. (We had already generally required a terminal object $\top$ for $\CC$, which gives one for $\VV$.)
We argue first that $H$ is $\omega$-continuous, then that $(\Id{}+H{-})$ is $\omega$-continuous.

We analyze the building blocks of $H$.
$G^{(a,t)}$ is defined as the composition of two functors, so it is $\omega$-continuous if both functors are.
We prove that postcomposition with $\proj{s} \hcomp \hat t$ is $\omega$-continuous (\coqident{SubstitutionSystems.ContinuitySignature.ContinuityOfMultiSortedSigToFunctor}{post_comp_with_pr_and_hat_is_omega_cont}).
For this, we require of $\CC$ that $\omega$-limits distribute over sub-singleton coproducts.
This means that the canonical morphism from the coproduct of the respective limits to the limit of the coproducts is an isomorphism.
Precomposition with any fixed functor is $\omega$-continuous, hence in particular for $\option^*_\ell$.

We move to $\omega$-continuity of $G^{(\vec a,t)}$, which we prove by induction on the length of $\vec a$. For an empty $\vec a$, this is a constant functor; and pointwise binary products preserve $\omega$-continuity.

Moreover, we require of $\CC$ that $\omega$-limits distribute over $I$-coproducts, a property that then also holds of $\VV$.
Then, $\omega$-continuity of the $G^{\,\arity(i)}$ carries over to their coproduct $H$.

As a final step, we need to show $\omega$-continuity of $(\Id{}+H{-})$ (\coqident{SubstitutionSystems.MultiSortedMonadConstruction_coind_actegorical}{is_omega_cont_Id_H}).
In order to avoid still other hypotheses about distribution of $\omega$-limits over certain colimits, we use that binary coproducts are (isomorphic) to $\bool$-indexed coproducts.
So, the final assumption on $\CC$ is that $\omega$-limits distribute also over $\bool$-coproducts (besides over sub-singleton coproducts and $I$-coproducts for the index set $I$ of the multi-sorted binding signature).

For the most interesting case $\CC=\SET$, all these requirements are met.

\subsection{Putting Everything Together}\label{sec:everything}

We can now apply the general results of
\cref{sec:syntaxinmoncat} for the construction of coinductive syntax with monadic substitution, as specified by a
multi-sorted binding signature.
We fix a set $\sort$ of sorts and a multi-sorted binding signature $(I,\arity)$ over $\sort$.
This means in particular that we take as $\VV$ the endofunctors on $\ccsort$ when instantiating the results of \cref{sec:syntaxinmoncat}.
We denote by \(H : [\ccsort,\ccsort] \to [\ccsort,\ccsort]\) the signature functor associated to $(I,\arity)$ according to the construction in \cref{sec:signaturefunctor}.

We aim to apply \cref{thm:mhssfromfinal} to the final \((\Id{}+H{-})\)-coalgebra obtained in \cref{sec:omegacont}.
To this end, the parameter $\theta$ of a MHSS has to be specified, \ie, we have to construct a suitable pointed tensorial strength $\theta$ for the signature functor $H$.
To recall, $\theta$ instructs via the notion of $(H,\theta)$-monoid (\cref{eq:hmonoid}) how the monoid multiplication $\mu$ acts on the constructors (embodied in the $H$-algebra $\tau$).
But in our situation, $\mu$ is the monad multiplication expressing substitution -- it is well-known that $\VV$-monoids in this case are nothing but monads on $\ccsort$ (\cf\coqident{CategoryTheory.Monoidal.Examples.MonadsAsMonoidsElementary}{monoid_to_monad_CAT}).
In other words, we have to define $\theta$ so that it describes correctly the recursive behaviour of substitution (as expressed by $\mu$).

\begin{example}[Strength for STLC]
  A suitable $\theta$ for the ``hand-written'' signature functor $H$ in \cref{ex:stlc} contains, in particular, operations from $Y\cdot (\app_{s,t}X)$ to $\app_{s,t}(Y\cdot X)$ and from $Y\cdot (\lam_{s,t}X)$ to $\lam_{s,t}(Y\cdot X)$ for any $s,t:\sort$, endofunctor $Y$ on $\ccsort$ that has a point $\eta:\Id{}\to Y$, %
  and endofunctor $X$ on $\ccsort$.
The instance relevant in \cref{eq:hmonoid} is then with $X$ and $Y$ the representation $T$ of STLC, and with $\eta$ the inclusion of variables in terms $T$.
The operations for \emph{application} do essentially nothing since $\mu$ should just descend into the subtrees.
For \emph{abstraction}, given a $\xi:\ccsort$, we need to specify a function of type \(T\xi+1_s \to T(\xi+1_s)\), where \(1_s : \ccsort\) is defined as \(1_s(t)\defeq (t = s)\); that is, \(X + 1_s\) is \(X\) extended by an element of sort \(s\).
In other words, we need to lift $T\xi:\ccsort$ extended by an element $\star$ of sort $s$
to $T\xi'$, with $\xi'$ the extension of $\xi$ by an element $\star$ of sort $s$ (as in \cref{ex:stlc}); this will need $\eta$ for \(\star\) in the input.
In short, for abstraction, the strength specifies essentially the famous ``lift'' operation on a substitution function to avoid capture when descending under a binder.
\end{example}

The construction of a suitable $\theta$ for the variant of the generic $H$ considered in \cite{CPP22} could have been adapted to our $H$, but we have preferred to give a construction that, although working on the level of endofunctors, is formed from building blocks that reside on the general level of monoidal categories. To structure this construction, we propose the notion of \emph{relative lax commutator} that generalizes the notion of ``pointed distributive law'' in \cite[Definition 10]{signatures_to_monads}.
\fscdarxiv{For lack of space, t}{T}he whole (technical) construction is explained in \cref{sec:constrptdstrength}\fscdarxiv{, but only in \cite{DBLP:journals/corr/abs-2308-05485}}{}.
We thus take as $\theta$ the pointed tensorial strength for $H$ described in the appendix. For STLC (\cref{example:stlc}), the strength can be exploited on the abstract level with base category $\CC$ (\cf\coqident[thetaSTLC]{SubstitutionSystems.STLC_actegorical_abstractcat}{a4d147d62cd033ae14f5653d183758ed} in the formalization). For forests (\cref{example:untypedforests} and \cref{example:typedforests}), we only exploited the situation with $\CC$ set to $\SET$ (\cf\coqident[thetaUntypedForest]{SubstitutionSystems.UntypedForests}{f9f6d2c8a60aa898e8c41345b02b9c5c} and \coqident[thetaForest]{SubstitutionSystems.Forests}{458425600edbc6e909943333ac02c4ff} in the formalization).

\cref{thm:mhssfromfinal} provides us with a MHSS that serves as input to \cref{thm:sigmamonoidfrommhss}, hence we get an $(H,\theta)$-monoid.
This monoid is, in particular, a monad, our (certified) substitution monad for the non-wellfounded syntax described by the given multi-sorted binding signature. For STLC, we have formalized this on the abstract level and constructed the finite Church numerals as well as the infinite Church numeral (\cf\coqfile{}{SubstitutionSystems.STLC_actegorical_abstractcat}).

To be more concrete, we can instantiate the base category $\CC$ to $\SET$ that satisfies all requirements on $\CC$ we made during the construction process and get a set of well-sorted non-wellfounded terms for any sort, given a supply of sets of variables for any sort, together with a substitution operation that respects sorts and satisfies the monad laws.
We replayed the construction of the finite Church numerals in STLC in this concrete setting (\cf\coqfile{}{SubstitutionSystems.STLC_actegorical}). For forests, we only considered the $\SET$ case and have instantiated the general constructions (\cf\coqfile{}{SubstitutionSystems.UntypedForests} and \coqfile{}{SubstitutionSystems.Forests}).

Although this not the topic of this paper, we mention that we also have adapted the results and the formalization of wellfounded syntax to the present setting:
This includes the construction of a MHSS from an initial $(I+H{-})$-algebra under the proviso it has been obtained through a Mendler-style construction based on $\omega$-cocontinuity of $H$ (\coqident{SubstitutionSystems.ConstructionOfGHSS}{initial_alg_to_mhss}).
We further established that this MHSS gives rise to an initial $(H,\theta)$-monoid \coqident{SubstitutionSystems.ConstructionOfGHSS}{SigmaMonoidFromInitialAlgebraInitial}, for the given strength $\theta$.
These results are on the level of a monoidal category, as in our \cref{sec:syntaxinmoncat}.
The signature functor $H$ constructed in \cref{sec:signaturefunctor} is even $\omega$-cocontinuous (based on the proof of the same property for the variant considered in \cite{CPP22}), under conditions on $\CC$ that are fulfilled for $\SET$.
Thus, $(\Id{}+H{-})$ is $\omega$-bicontinuous, and we get a morphism of $(H,\theta)$-monoids from the inductive to the coinductive syntax (\coqident{SubstitutionSystems.MultiSortedEmbeddingIndCoindHSET}{ind_into_coind} for the case $\SET$).
The above-mentioned four \Coq vernacular example files in our \UniMath library illustrate that, thanks to that actegorical development, the use of the formalized wellfounded and the formalized non-wellfounded syntax for those multi-sorted binding signatures can be done in parallel. For example, the finite Church numerals in STLC are developed independently of the choice for one of these two options. 
This conforms to the intuition that every single wellfounded term belonging to the non-wellfounded syntax already belongs to the wellfounded syntax, even though the categorical development of these structures is very different.

\section{Related Work and Conclusions}\label{sec:concl}

We have cited throughout the paper the work we rely on or which initiated a line of thought. Here, we give additional information on other related work (that may have been also cited already in the main text).
\cite{DBLP:journals/corr/KurzPSV13} also have codatatypes and define datatype-generic substitution corecursively, and they even calculate infinitary normal forms for their example of untyped $\lambda$-calculus.
However, they do not consider typed systems, and the results are not presented on the abstraction level of monoidal categories. Instead, they use a concrete ``nominal'' presentation of syntax with binders.
\cite{DBLP:journals/jfp/AllaisACMM21} also have codatatypes and even datatype-generic programming not only of substitution, but the work is not based on category theory (and so the approach is rather axiomatic than definitional). That work is implemented in the Agda system.
\cite{Zsido-phd} considers different categorical models of simply-typed wellfounded syntax. In its Chapter 5, the monoidal category corresponding to the framework of \cite{DBLP:conf/lics/FiorePT99} is laid out in detail for simple types, and its Chapter 7 compares it with the monoidal category of endofunctors over a slice category. The latter is close to the concrete instance we are studying in \cref{sec:msbs}, but we deal with non-wellfounded syntax. All in all, \cite{Zsido-phd} has a lot on the strength construction with actegories, including for the typed case, and this for more than one concrete categorical representation, but non-wellfounded syntax is not considered.
\cite{DBLP:journals/pacmpl/BlanchetteGPT19} have an approach to codatatypes that is definitionally based on category theory; but it is strongly tied to set theory through infinite cardinal numbers that appear in the definition of the class of ``bounded natural functors'' they consider. This allows them to implement the approach in the Isabelle system (based on a very small kernel).
Popescu \cite{DBLP:journals/pacmpl/Popescu24} compares different corecursors for syntax with variable binding in nominal style; it is partially formalized in Isabelle/HOL.
\cite{DBLP:journals/pacmpl/FioreS22} also translate multi-sorted binding signatures into signatures with strength. Their notion of syntax includes ``meta-variables'', but they stay within the wellfounded terms and heavily use inductive families as provided by the Agda system.
\cite{DBLP:conf/lics/BorthelleHL20} comes with a \UniMath/\Coq formalization of the whole chain even from \emph{skew} monoidal categories to an initial $(H,\theta)$-monoid, hence for wellfounded syntax. Beware that swapping of the arguments of the tensor is not a harmless operation for skew monoidal categories, so our present definition of MHSS does not fit as a pivotal element in their development.
\cite{DBLP:conf/fossacs/HirschowitzHLM22} rework the approach of \cite{DBLP:conf/lics/FiorePT99}, still using pointed strength and $(H,\theta)$-monoids. They also deal with simple types but do not consider non-wellfounded syntax.

We have presented, through the notion of monoidal heterogeneous substitution system, a tool which provides a monadic substitution operation also for non-wellfounded syntax, and this for the first time on the abstraction level of monoidal categories.

Our definitions and results unify the construction of both wellfounded and non-wellfounded syntax with substitution.

We also instantiated monoidal heterogeneous substitution systems to endofunctor categories and adapted the full chain from multi-sorted binding signatures to substitution for non-wellfounded syntax. For the sake of this instantiation, we provide modular results to prove $\omega$-cocontinuity of signature functors and hence obtain both $\omega$-cocontinuity and $\omega$-continuity for the signature functors we generate from multi-sorted binding signatures.

All the results of this paper have been rigorously formalized with \UniMath/\Coq.
For the specific category of sets (types of homotopy level 2 according to univalent foundations) as base category, the hypotheses of the construction of non-wellfounded syntax can be proved.
Hence, for this base category, we have a ``concrete'' formalization of the tool chain, which provides in particular a formal construction in univalent foundations of non-wellfounded syntax with binding, as instructed by a multi-sorted binding signature, and its monadic substitution operation.

A question we have left open is that of equations on non-wellfounded terms, for instance, $\beta$-equivalence.
We anticipate that some definitions could carry over from the wellfounded setting, like the definition of equations and reductions given in \cite{DBLP:conf/rta/AhrensHLM19,DBLP:journals/pacmpl/AhrensHLM20}.
The construction of suitable terminal coalgebras, however, seems to require some work.

\fscdarxiv{}{\bibliography{literatureFSCD24}}

\appendix

\section{An Easy Example for the Application Scenario}\label{sec:churchnumerals}

To ease the understanding of the example given in \cref{sec:intro}, we also show the one of Church numerals.

Let $0$ be a base type. We define a closed forest
of type $(0\to 0)\to 0\to 0$ in \cref{fig:churchgraph} \cite[Example 5]{EspiritoSantoMatthesPintoCoinductiveApproachAPAL}.
\begin{figure}[tb]
 \begin{tikzpicture}[inner sep=2mm,
         place/.style={circle,draw=blue!50,fill=blue!20,thick},
         scale=1.2]
         \node[place] at (-6,0) (root)  {};
         \node[place] at (-4,0) (lambdaf)  {$\lambda f^{0\impl0}$};
          \node[place] at (-2,0) (lambdax)  {$\lambda x^0$};
          \node[place] at (0,0) (sum)  {$+$};
          \node[place] at (-1,-1.5) (x)  {$x$};
          \node[place] at (1,-1.5) (f)  {$f@$};
          \draw [->] (root) to [out=0,in=180] (lambdaf);
          \draw [->] (lambdaf) to [out=0,in=180] (lambdax);
          \draw [->] (lambdax) to [out=0,in=180] (sum);
          \draw [->] (sum) to [out=225,in=90,looseness=0.8] (x);
          \draw [->] (sum) to [out=315,in=90,looseness=0.8] (f);
          \draw [->, blue, thick] (f) to [out=315,in=45,looseness=1.5] (sum);
         \end{tikzpicture}
\vspace{-0.4cm}
\caption{Forest representation of all Church numerals, including infinity}\label{fig:churchgraph}
\vspace{-0.4cm}
\end{figure}
$f@$ is short for $f\tuple N$ with $N$ given by where the arrow points to.
The back link (in blue and thick) forms a cycle that does not go through a $\lambda$-abstraction.
Hence, we get a rational tree (with only a finite number of non-isomorphic subtrees).
This is a representation of all Church numerals, including infinity, and they are all the ``solutions'' (including the non-wellfounded ones) for the search for inhabitants in long normal form of the type $(0\to 0)\to 0\to 0$.
Thus, by infinite unfolding of the just binary choice, an infinite number of finite solutions and even one infinite solution are obtained.
The latter is the only infinite Church numeral, obtained by looping with $f$.
In naive proof search, this is at least a potential outcome, and one may want to analyze this phenomenon.

\section{Recalling Some Notions of the Category-theoretical Background}\label{sec:appbackground}

\fscdarxiv{\input{appendixbackgroundplaceholder}}{To complete the definitions recalled from the literature in \cref{sec:moncatsacts}, \cref{fig:moncatlaws} shows the two coherence conditions of a monoidal category while
\cref{fig:actegorylaws} shows those conditions of an actegory.

\begin{figure}[tb]
\[
  \begin{tikzcd}
    (x\otimes I)\otimes y \ar[rr, "\alpha_{x,I,y}"] \ar[dr, "\rho_x\otimes 1_y"'] & & x \otimes (I \otimes y) \ar[dl, "1_x\otimes \lambda_y"] \\
    & x \otimes y
  \end{tikzcd}
\begin{tikzcd}[column sep=-1em]
  (w\otimes x)\otimes(y\otimes z) \ar[rr, "\alpha_{w,x,y\otimes z}"] & & w\otimes (x\otimes(y\otimes z))
  \\
  ((w\otimes x)\otimes y)\otimes z \ar[u, "\alpha_{w\otimes x,y,z}"'] \ar[dr, "\alpha_{w,x,y}\otimes 1_z"'] & &w\otimes((x\otimes y)\otimes z) \ar[u, "1_w\otimes\alpha_{x,y,z}"]
  \\
    &(w\otimes(x\otimes y))\otimes z \ar[ur, "\alpha_{w,x\otimes y,z}"']
  \end{tikzcd}
\]

\caption{Triangle and pentagon law of a monoidal category}\label{fig:moncatlaws}
\end{figure}

\begin{figure}[tb]
  \[
    \begin{tikzcd}
      (v\otimes I)\odot y \ar[rr, "\actor_{v,I,y}"] \ar[dr, "\rho_v\odot 1_y"'] & & v \odot (I \odot y) \ar[dl, "1_v\odot \lambda_y"] \\
      & v \odot y
    \end{tikzcd}
    \begin{tikzcd}[column sep=-1em]
      (v\otimes w)\odot(v'\odot z) \ar[rr, "\actor_{w,v,v'\odot z}"] & & w\odot (v\odot(v'\odot z)) \\
      ((w\otimes v)\otimes v')\odot z \ar[u, "\actor_{w\otimes v,v',z}"'] \ar[dr, "\alpha_{w,v,v'}\odot 1_z"'] & &w\odot((v\otimes v')\odot z) \ar[u, "1_w\odot\actor_{v,v',z}"] \\
      &(w\otimes(v\otimes v'))\odot z \ar[ur, "\actor_{w,v\otimes v',z}"']
    \end{tikzcd}
  \]

\caption{Triangle and pentagon law of an actegory}\label{fig:actegorylaws}
\end{figure}

\cref{fig:monoid} shows the laws of a monoid in a monoidal category (\cf \coqfile{}{CategoryTheory.Monoidal.CategoriesOfMonoids}).
\begin{figure}[tb]
\[
  \begin{tikzcd}[column sep=5em]
    I\otimes v\ar[r,"\eta\otimes 1_v"] \ar[dr,"\lambda_v"']&v\otimes v\ar[d,"\mu"] & v\otimes I\ar[l,"1_v\otimes \eta"'] \ar[dl,"\rho_v"]\\
    & v
  \end{tikzcd}\hspace{1em}
  \begin{tikzcd}[column sep=1em]
    (v\otimes v)\otimes v\ar[rr,"{\alpha_{v,v,v}}"] \ar[dr,"\mu\otimes 1_v"'] && v\otimes(v\otimes v) \ar[rr,"1_v\otimes\mu"] && v\otimes v \ar[dl,"\mu"]\\
    & v\otimes v\ar[rr,"\mu"] && v
  \end{tikzcd}
\]
\caption{Laws of a $\VV$-monoid: left unit law, right unit law and associative law}\label{fig:monoid}
\end{figure}

}

\section{Proof of \cref{thm:mhssfromfinal}}\label{sec:proofmhssfromfinal}

This short appendix completes the proof of \cref{thm:mhssfromfinal} in  \cref{sec:finalcoalgtomhss}. It also discusses why the result on final coalgebras is in some sense easier than its counterpart for initial algebras.

The diagram describing a solution of $\teqm$ is given in \cref{fig:soleqmformhss}.
\begin{figure}[tb]
\[
  \begin{tikzcd}[column sep=7em, row sep=scriptsize]
    z\otimes t \ar[ddd, "h"']& z\otimes (I+Ht) \ar[l, "1_z\otimes\tout^{-1}"]& z\otimes I + z \otimes Ht \ar[l, "{[1_z\otimes\tinl,1_z\otimes\tinr]}"] \ar[d, "{\rho_z+\theta_{(z,e),t}}"]\\
    && z + H(z\otimes t) \ar[d, "{[f\cdot\tinr,\tinr\cdot\tinl]}"]\\
    &&(I+H(z\otimes t))+t\ar[d, "(I+H{-})h +1_t"]\\
    t&&(I+Ht)+t\ar[ll, "{[\tout^{-1},1_t]}"]
  \end{tikzcd}
\]
\caption{Diagram characterizing a solution $h$ for $\teqm$}\label{fig:soleqmformhss}
\end{figure}
Of course, the diagram governing $\gst{(z,e)}{f}$ can be brought into a single equation over coproducts, as seen in \cref{fig:mhssoneequation}.
\begin{figure}[tb]
\[
    \begin{tikzcd}[column sep=7em, row sep=scriptsize]
    z\otimes t \ar[dd, "h"']
    &
    z\otimes I + z \otimes Ht \ar[l, "{[1_z\otimes\eta,1_z\otimes\tau]}"]
    \ar[d, "\rho_z+\theta_{(z,e),t}"]
    \\
    & z + H(z\otimes t) \ar[d, "1_z+H h"]
    \\
    t
    &
    z+Ht \ar[l, "{[f,\tau]}"']
  \end{tikzcd}
\]
\caption{Diagram characterizing morphism $h$ that should be $\gst{(z,e)}{f}$}\label{fig:mhssoneequation}
\end{figure}
The chain of morphisms from $z\otimes I + z \otimes Ht$ to $t$ in both diagrams -- \cref{fig:soleqmformhss} and \cref{fig:mhssoneequation} -- is identical on the path to the left, as well as on the path to the right.

We remark that \cref{thm:mhssfromfinal} (and also its proof) is slicker than the case of wellfounded syntax studied in \cite{CPP22} (however, concretely for endofunctor categories) where extra requirements beyond being an initial algebra come into play so as to guarantee the applicability of a categorical Mendler-style recursion scheme.
This difference can be motivated as follows: substitution for functor $T$ is represented by a monadic multiplication operation of type $T\cdot T\to T$ (with $T\cdot T$ self-composition of $T$). In the non-wellfounded case, this has the support $T$ of the final coalgebra as target, which is suitable for using finality.
However, in the wellfounded case, the source type is $T\cdot T$ and not just $T$ that would be the basis for using initiality.

\section{Pointed Tensorial Strength for the Signature Functor}\label{sec:constrptdstrength}

\fscdarxiv{\input{appendixstrengthconstructionplaceholder}}{%

In \cref{sec:everything}, we need to construct a pointed tensorial strength $\theta$ for the signature functor $H$, associated with a given multi-sorted binding signature. We first construct ``relative lax commutators'' for the basic building blocks of $H$ in \cref{sec:appcommutator} and then use those to build  $\theta$ from pointed tensorial strength for the building blocks of $H$ that are already endofunctors on $[\ccsort,\ccsort]$.

\subsection{Relative Lax Commutators}\label{sec:appcommutator}

\begin{definition}[Relative lax commutator, \coqident{CategoryTheory.Actegories.ConstructionOfActegoryMorphisms}{relativelaxcommutator}]\label{def:commutator}
  Let $\VV$, $\WW$ be monoidal categories (the components of $\WW$ shall be indicated through primed identifiers) and $(F,\epsilon,\mu):\WW\to\VV$ be a strong monoidal functor.
  Fix a $v_0:\VV$. A \emph{lax commutator for $v_0$ relative to $F$} is given by a natural transformation $\gamma=(\gamma_w)_{w:\WW}$ with components $\gamma_w:F w\otimes v_0\to v_0\otimes F w$, subject to the unit and the tensor laws in \cref{fig:unitlawcommutator} and \cref{fig:tensorlawcommutator}, respectively.
\end{definition}

\begin{figure}[tb]
\[
  \begin{tikzcd}[column sep=3em]
    FI'\otimes v_0\ar[rr,"\gamma_{I'}"] \ar[d,"\epsilon^{-1}\otimes 1_{v_0}"']& & v_0 \otimes FI'\\
    I\otimes v_0 \ar[r,"\lambda_{v_0}"] & v_0 \ar[r,"\rho^{-1}_{v_0}"] & v_0\otimes I \ar[u,"1_{v_0}\otimes \epsilon"']
  \end{tikzcd}
\]
\caption{Unit law of a lax commutator $\gamma$ for $v_0$ relative to $F$}\label{fig:unitlawcommutator}
\end{figure}

\begin{figure}[tb]
\[
  \begin{tikzcd}[column sep=3em]
    F(w\otimes'w')\otimes v_0 \ar[rrr,"\gamma_{w\otimes w'}"] \ar[d,"\mu^{-1}_{w,w'}\otimes 1_{v_0}"']&&&v_0\otimes F(w\otimes' w')\\
    (Fw\otimes Fw')\otimes v_0\ar[d,"\alpha"']&&&v_0\otimes (Fw\otimes Fw')\ar[u,"1_{v_0}\otimes\mu_{w,w'}"']\\
    Fw \otimes (Fw'\otimes v_0)\ar[r,"1_{Fw}\otimes \gamma_{w'}"]&
    Fw\otimes (v_0\otimes Fw')\ar[r,"\alpha^{-1}"]&
    (Fw \otimes v_0)\otimes Fw'\ar[r,"\gamma_w\otimes1_{Fw'}"]&
    (v_0\otimes Fw)\otimes Fw'\ar[u,"\alpha"']
  \end{tikzcd}
\]
\caption{Tensor law of a lax commutator $\gamma$ for $v_0$ relative to $F$. (The indices of the associator of $\VV$ are omitted.)}\label{fig:tensorlawcommutator}
\end{figure}
The unit law uniquely determines $\gamma_{I'}$, and the tensor law uniquely determines $\gamma_{w\otimes w'}$ when $\gamma_w$ and $\gamma_{w'}$ are already given.

In the case that $(F,\epsilon,\mu)$ is the identity on $\VV$, we can omit the vertical arrows in the unit law and the uppermost vertical arrows in the tensor law.
The tensor law then becomes one of the two hexagonal laws of a braided monoidal category. But beware that we fix the object $v_0$.
In a \emph{symmetric} monoidal category, the shrunk down unit law (for $F$ the identity) is even derivable.\footnote{Remark 2.4 in nLab entry \url{https://ncatlab.org/nlab/revision/braided+monoidal+category/51}.}
So, in essence, this is a braiding, but it is relative to that strong monoidal functor $F$, there is no isomorphism condition, and since $v_0$ is fixed, only half of the laws make sense.
(And the unit law has to be stated explicitly to compensate for that.)\footnote{We also considered the name ``relative semi-braiding'' instead of ``relative lax commutator''.}

A less abstract version of the notion of relative lax commutator was introduced under the name ``pointed distributive law'' in \cite[Definition 10]{signatures_to_monads}.
It corresponds to the instance where $\VV$ is $[\CC,\CC]$, $\WW$ the corresponding pointed endofunctors and $F$ the forgetful functor that forgets the points.
In that instance, all the morphisms in the unit law diagram are (pointwise) the identity on $v_0$, as well as the preservation of the tensor ($\mu$ and $\mu^{-1}$) in the tensor law diagram.
In hindsight, ``pointed distributive law'' rather appears as a misnomer.

However, the results in \cite{signatures_to_monads} smoothly generalize to our relative lax commutator.
To begin with, the abstract counterpart to \cite[Lemma 11]{signatures_to_monads} is as follows: 
\begin{construction}[Lineator for reindexed actegory from relative lax commutator]\label{constr:lineatorfromcommutator}
  Let $\VV$, $\WW$ and $F$ be as in \cref{def:commutator}.
  Assume a $\VV$-actegory with action $\odot$ on a category $\CC$.
  Fix some $v_0:\VV$ and consider the endofunctor $G$ on $\CC$ given by $G\eqdef 1_{v_0}\odot{-}$ (the left whiskering functor for $v_0$).
  Let $\odot'$ be the action of the $\WW$-actegory obtained by reindexing $\odot$ along $F$.
  Let $\gamma$ be a lax commutator for $v_0$ relative to $F$.
  Then we construct a lax lineator $\ell$ for $G$ with $\odot'$ as source and target, by the (pointwise) definition of $\ell_{w,x}$ given by the following diagram:%
  \[
    \begin{tikzcd}[column sep=5em]
      Fw \odot(v_0\odot x) \ar[r,"\ell_{w,x}"] \ar[d,"\actor^{-1}_{Fw,v_0,x}"']& v_0\odot(Fw\odot x)\\
      (Fw \otimes v_0) \odot x\ar[r,"\gamma_w\odot 1_x"] & (v_0\otimes Fw)\odot x \ar[u,"\actor_{v_0,Fw,x}"']
    \end{tikzcd}
  \]
 \end{construction}
\begin{lemma}
  The thus constructed $\ell$ is indeed a lax lineator for $G$ with $\odot'$ as source and target.
\end{lemma}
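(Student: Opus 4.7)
The plan is to verify, in turn, the naturality of $\ell$ and the two preservation laws of \cref{fig:linearity}, exploiting that each component $\ell_{w,x}$ is defined as the composite $\actor^{-1}_{Fw,v_0,x}\comp(\gamma_w\odot 1_x)\comp\actor_{v_0,Fw,x}$, and that the reindexed action $\odot'$ acts on objects as $w\odot' x\eqdef Fw\odot x$, with reindexed unitor $\lambda'_x\eqdef \lambda_x\comp(\epsilon^{-1}\odot 1_x)$ up to the appropriate coherence and reindexed actor obtained analogously from $\actor$ and $\mu^{-1}$.

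First I would dispatch naturality in both arguments. Naturality in $x$ is immediate from functoriality of $-\odot-$ and naturality of $\actor$ (the morphism $\gamma_w\odot 1_x$ is natural in $x$ by bifunctoriality). Naturality in $w$ follows from naturality of $\gamma$ (a component of \cref{def:commutator}) together with naturality of $\actor$ in its first two arguments; both require only routine rewriting by bifunctoriality of $\odot$.

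Second, I would prove preservation of the unitor, i.e.\ the left diagram in \cref{fig:linearity} instantiated at source/target action $\odot'$ and functor $G=1_{v_0}\odot{-}$. Unfolding $\lambda'$ on both sides reduces the equation to a statement purely in $\VV$ (whiskered by $\odot x$) plus some $\actor$-based rebracketings. The key ingredient is the unit law of $\gamma$ (\cref{fig:unitlawcommutator}), which, when whiskered by $\odot x$ and combined with the triangle law for the actegory (involving $\lambda_x$, $\rho_{v_0}$ and $\actor_{v_0,I,x}$ resp.\ $\actor_{I,v_0,x}$), produces exactly the required equality. This is essentially the same kind of calculation as in \cite[Lemma 11]{signatures_to_monads}, transported to the more general setting of a strong monoidal $F$ instead of the forgetful functor from pointed endofunctors.

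Third, and this is the main obstacle, I would verify the actor preservation law. Both sides of that pentagon can be expanded to composites involving two copies of $\gamma$ (one for $w$, one for $w'$), several instances of $\actor$ and $\actor^{-1}$ in the $\VV$-actegory, and the strong monoidal coherence morphism $\mu_{w,w'}$ pulled through by reindexing. The plan is to first move all $\actor$'s to one side using the pentagon law of the actegory (\cref{fig:actegorylaws}), thereby isolating a subdiagram that lives in $\VV$ after whiskering by $\odot x$; this subdiagram is precisely the whiskering by $\odot x$ of the tensor law of $\gamma$ (\cref{fig:tensorlawcommutator}). The bookkeeping here is nontrivial because one must carefully track the six associator instances that appear on each side and match them against the five instances in the pentagon of $\gamma$, but each individual rewrite is forced. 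All three verifications are coherence-diagram chasing and, as noted in the paper, are carried out in the \UniMath formalization referenced by \coqident{CategoryTheory.Actegories.ConstructionOfActegoryMorphisms}{relativelaxcommutator}.
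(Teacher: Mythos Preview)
Your plan is correct and matches the paper's approach: the paper itself gives no textual proof at all, merely pointing to the formalization (\coqident[reindexedlineator_from_commutator]{CategoryTheory.Actegories.ConstructionOfActegoryMorphisms}{reindexedstrength_from_commutator}), and your outline is an accurate description of what that formal proof does---naturality by functoriality and naturality of $\actor$ and $\gamma$, unitor preservation via the unit law of $\gamma$ plus the actegory triangle, and actor preservation via the tensor law of $\gamma$ plus the actegory pentagon. One minor slip: the \Coq identifier you cite at the end is the one for the definition of relative lax commutator, not for the lineator construction; the relevant one is \nolinkcoqident{reindexedstrength\_from\_commutator}.
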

\begin{proof}
  The proof of the laws has been formalized (\coqident[reindexedlineator_from_commutator]{CategoryTheory.Actegories.ConstructionOfActegoryMorphisms}{reindexedstrength_from_commutator}).
\end{proof}

The abstract counterpart to \cite[Lemma 12]{signatures_to_monads} is as follows:
\begin{construction}[Composition of relative lax commutators]\label{constr:compcommutator}
  Let $\VV$, $\WW$ and $F$ be as in \cref{def:commutator}.
  Let furthermore $\gamma^i$ be a lax commutator for $v_i:\VV$ relative to $F$, for $i\in\{1,2\}$.
  Then we construct a lax commutator $\gamma$ for $v_1\otimes v_2$ relative to $F$, by the (pointwise) definition of $\gamma_w$ given by the diagram in \cref{fig:compositecommutator} (omitting the indices of the associator).
\end{construction}
\begin{figure}[tb]
  \[
    \begin{tikzcd}[column sep=3.5em]
      Fw\otimes(v_1\otimes v_2)\ar[rrr,"\gamma_w"] \ar[d,"\alpha^{-1}"'] &&& (v_1\otimes v_2)\otimes Fw \\
      (Fw \otimes v_1)\otimes v_2\ar[r,"\gamma^1_w\otimes 1_{v_2}"] & (v_1\otimes Fw)\otimes v_2 \ar[r,"\alpha"] & v_1\otimes (Fw \otimes v_2) \ar[r,"1_{v_1}\otimes\gamma^2_w"] & v_1\otimes(v_2\otimes Fw) \ar[u,"\alpha^{-1}"']
    \end{tikzcd}
  \]
  \caption{Lax commutator $\gamma$ for $v_1\otimes v_2$ relative to $F$.}\label{fig:compositecommutator}
\end{figure}

\begin{lemma}
  The thus constructed $\gamma$ is indeed a lax commutator $\gamma$ for $v_1\otimes v_2$ relative to $F$.
\end{lemma}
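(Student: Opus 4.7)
The plan is to verify the three requirements of \cref{def:commutator} for the composite $\gamma$: naturality in $w$, the unit law of \cref{fig:unitlawcommutator} with $v_0 := v_1 \otimes v_2$, and the tensor law of \cref{fig:tensorlawcommutator} with $v_0 := v_1 \otimes v_2$. All three are obtained by assembling the corresponding properties of $\gamma^1$ and $\gamma^2$ together with the Mac Lane coherence laws of the monoidal category $\VV$.

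For \textbf{naturality}, the composite in \cref{fig:compositecommutator} is built from $\gamma^1_w\otimes 1_{v_2}$, $1_{v_1}\otimes\gamma^2_w$, and associators that depend on $w$ only through the leftmost tensor factor $Fw$. Each constituent is natural in $w$ (the naturality of $\alpha^{\pm 1}$ being standard, and the naturalities of $\gamma^1$ and $\gamma^2$ given by hypothesis), so naturality of $\gamma$ follows by pasting. This step is mechanical and a candidate for straightforward formalization.

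For the \textbf{unit law}, I would instantiate $w := I'$ in \cref{fig:compositecommutator} and rewrite the two $\gamma^i_{I'}$ pieces via the unit laws of $\gamma^1$ and $\gamma^2$. After this substitution, the diagram consists entirely of $\epsilon^{\pm 1}$, unitors $\lambda,\rho$, and an associator $\alpha$ at $(v_1,v_2, FI')$ (together with an $\alpha^{-1}$ at $(FI', v_1, v_2)$). Using the triangle law of $\VV$ (\cref{fig:moncatlaws}) to relate $\alpha$ with $\rho\otimes 1$ and $1\otimes\lambda$, and naturality of $\lambda,\rho$ with respect to $\epsilon$, the diagram collapses to the required unit law at $v_1\otimes v_2$. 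This is essentially a coherence calculation.

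The \textbf{main obstacle} is the \textbf{tensor law}, which amounts to showing that the composite $\gamma_{w\otimes'w'}$ agrees with the right-hand path of \cref{fig:tensorlawcommutator} at $v_0:=v_1\otimes v_2$. I would expand both sides using the definition in \cref{fig:compositecommutator}, substitute the tensor laws of $\gamma^1$ and $\gamma^2$ into the two inner boxes that feature $\gamma^i_{w\otimes' w'}$, and then shuffle the resulting chain of associators through repeated applications of the pentagon law to interleave the $\gamma^1_w\otimes 1$, $\gamma^2_w\otimes 1$, $1\otimes \gamma^1_{w'}$ and $1\otimes\gamma^2_{w'}$ pieces in the order required by the outer diagram. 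The bookkeeping is intricate because up to five associators are chained on either side and the naturality of $\alpha$ must be used to slide $\mu_{w,w'}^{-1}$ past the tensor factors $v_1$ and $v_2$. I would organise the argument by first reducing both sides to normal forms where all associators live on the boundary of the diagram (using pentagon plus naturality), and then matching the interior, which consists only of $\gamma^i$-instances and identities. This is the analogue of the proof that the hexagon laws of a braided monoidal category are closed under tensoring in the braided variable, and it is the step for which the formalization in \coqident{CategoryTheory.Actegories.ConstructionOfActegoryMorphisms}{} will have to carry most of the diagrammatic weight.
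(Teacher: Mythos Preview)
Your proposal is correct and matches the paper's approach: the paper's own proof is a one-line deferral to the formalization (\coqident{CategoryTheory.Actegories.ConstructionOfActegoryMorphisms}{composedrelativelaxcommutator}), and what you outline is precisely the kind of direct verification of naturality, unit law, and tensor law that such a formalization carries out. Your sketch is in fact more informative than the paper's prose, which offers no details beyond the pointer to the \Coq code.
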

\begin{proof}
  The proof of the laws has been formalized (\coqident{CategoryTheory.Actegories.ConstructionOfActegoryMorphisms}{composedrelativelaxcommutator}).
\end{proof}

We also construct a lax commutator for the unit relative to any $F$.
For this (\cf \coqident{CategoryTheory.Actegories.ConstructionOfActegoryMorphisms}{unit_relativelaxcommutator}), we use that, in a monoidal category, the left and right unitor agree at the unit.

We now obtain an essential building block for the construction of a pointed tensorial strength for the signature functor $H$ defined in \cref{sec:signaturefunctor} (given a multi-sorted binding signature).
\begin{construction}\label{constr:strengthprecompwithoptionstar}
  Given a set $\sort$ of sorts and a list $\ell$ of sorts, we construct a pointed tensorial strength $\theta_0^\ell$ for the functor given by precomposition with $\option^*~\ell$.

  As a first step, for any sort $s$, we obtain a lax commutator $\gamma^s$ for $\option_s$ relative to the forgetful functor $U$ used for the definition of pointed tensorial strength, \cf \coqident{SubstitutionSystems.BindingSigToMonad_actegorical}{ptdlaxcommutator_genopt}.
  To recall the task: given a pointed endofunctor $(Y,\eta)$, \ie, $Y:\ccsort\to\ccsort$ and $\eta:\Id{}\to Y$, one has to define $\gamma^s_{(Y,\eta)}:Y\cdot\option_s\to \option_s\cdot Y$, which is done pointwise: $\gamma^s_{(Y,\eta),\xi}:\option_s(Y\xi)\to Y(\option_s\,\xi)$.
  This is again defined pointwise: $\gamma^s_{(Y,\eta),\xi,t}:\option_s(Y\xi)t\to Y(\option_s\,\xi)t$.
  One can now do case analysis on the source coproduct.
  In the left branch, one uses $\eta$\footnote{This is the well-known reason why one has to work with \emph{pointed} tensorial strength when dealing with variable binding.}, in the right branch, one applies functoriality of $Y$ -- this all analogously to the well-known un(i)sorted case. The proof that this indeed satisfied the laws of \cref{def:commutator} is routine.

  As a second step, using $\gamma^s$ for all the sorts $s$ in our list $\ell$, we get a lax commutator $\gamma^\ell$ for $\option^*_\ell$ relative to $U$ (\cf\coqident{SubstitutionSystems.MultiSorted_actegorical}{ptdlaxcommutatorCAT_option_list}), by induction on the length of $\ell$:
  If $\ell$ is empty, we use the remark preceding this lemma about the construction of a lax commutator for the unit $\Id{}$.
  The composite case is a direct use of \cref{constr:compcommutator} and the inductive hypothesis.

  In the final step, we apply \cref{constr:lineatorfromcommutator} with the canonical self action as action $\odot$, etc. -- the precise situation of the definition of pointed tensorial strength through a reindexed actegory.
\end{construction}

\subsection{Using Actegory Theory}\label{sec:usingactegorytheory}

Let $\sort$ be a set of sorts and $(I,\arity)$ a multi-sorted binding signature over $\sort$.
We now construct a pointed tensorial strength $\theta$ for $H$, as defined in \cref{sec:signaturefunctor}.
We construct the strength from strengths\footnote{We will from now on more briefly speak of ``strength'' instead of ``pointed tensorial strength''.} for ``building blocks'', that is, for functors $F: [\ccsort,\ccsort] \to [\ccsort,\ccsort]$ from which the signature functor $H$ is obtained.
For the more basic building blocks that are just endofunctors on $\ccsort$, such as $\option^*_\ell$, we already have \cref{constr:strengthprecompwithoptionstar}.

We follow the top-down approach of \cref{sec:signaturefunctor} and always explain the general results on actegories that we instantiate to endofunctor categories (which makes the method different from that in \cite{CPP22}).

We start with the final step of the construction (\coqident{SubstitutionSystems.MultiSorted_actegorical}{MultiSortedSigToStrength'}).
Assume that we already have a strength $\theta^{(\vec a,t)}$ for all $G^{(\vec a,t)}$, with $(\vec a,t):(\sort^* \times \sort)^* \times \sort$.
In general, given actegories on $\CC$ and $\DD$ and an $I$-indexed family of functors from $\CC$ to $\DD$ with lineators, there is a construction (\coqident{CategoryTheory.Actegories.ConstructionOfActegoryMorphisms}{lax_lineator_coprod}) of a lineator for the coproduct of that family of functors under the proviso of having distribution of $I$-indexed coproducts over the actegory on $\DD$.
Analogously to what we needed at the beginning of \cref{sec:finalcoalgtomhss}, this means that the morphism from $\coprod_{i:I}(v\odot' x_i)$ to $v\odot'\coprod_{i:I}x_i$, defined by case analysis and injections, has an inverse.
Such distribution is inherited through reindexing of actions (\coqident{CategoryTheory.Actegories.CoproductsInActegories}{reindexed_coprod_distributor}), and it is available in the canonical self-action of an endofunctor category (given $I$-indexed coproducts in the base category).
Hence, the lineators/strengths $\theta^{\arity(i)}$ for $i:I$ can be put together to get a lineator/strength for $H$.

The penultimate step (\coqident{SubstitutionSystems.MultiSorted_actegorical}{StrengthCAT_hat_exp_functor_list'_optimized}) constructs strength $\theta^{(\vec a,t)}$ for $G^{(\vec a,t)}$.
It assumes that we already have a strength $\theta^{(a,t)}$ for any $G^{(a,t)}$, with $(a,t):(\sort^* \times \sort)\times \sort$.

In general, there is a construction that reindexes lineators: in the situation of the definition of a reindexed actegory (\cref{sec:ptdstrength}), we assume a further $\VV$-actegory on category $\DD$, a functor $G:\CC\to\DD$ and a lineator $\ell$ for $G$ between those actegories.
Given $\ell_{v,x}:v\odot' Gx\to G(v\odot x)$, we simply set $\ell'_{w,x}\eqdef \ell_{Fw,x}$ and obtain a lineator $\ell'$ for the same functor $G$ but between the $\WW$-actegories obtained from reindexing the given actegories on $\CC$ and $\DD$ along $F$, respectively (\coqident{CategoryTheory.Actegories.ConstructionOfActegoryMorphisms}{reindexed_lax_lineator}).

We now detail the construction of $\theta^{([],t)}$. Thanks to reindexing, we only need to construct a lineator with respect to the canonical self-actions.
Since $G^{([],t)}$ has been presented as a composition of two functors, we only need to compose the lineators for both (thanks to the generic construction of the lineator of a composition from lineators of the constituents which is a building block of the well-known bicategory of actegories and lax functors).
There is in general a lineator for constantly constant functors $F:\DD\to[\CC,\EE]$ with an arbitrary $[\CC,\CC]$-actegory on $\DD$ as source, and with the $[\CC,\CC]$-actegory on $[\CC,\EE]$ in the target given by precomposition with endofunctors on $\CC$ -- the lineator is pointwise the identity. (Notice that, in general, constant functors do not have lineators.)
We thus have a lineator for the first functor in the presentation of $G^{([],t)}$.

\begin{construction}[Lineator for post-composition with fixed functor]\label{constr:lineatorpostcomp}
  We construct a lineator $\ell$ for post-composition with a fixed functor $G:\DD\to\CC$, with the source $[\CC,\CC]$-actegory on $[\CC,\DD]$ given by precomposition, and the target $[\CC,\CC]$-actegory the one with the canonical self-action.
For $F:\CC\to\CC$ and $K:\CC\to\DD$, $\ell_{F,K}$ is just the associator $\alpha:F\cdot(K\cdot G)\to (F\cdot K)\cdot G$.
\end{construction}
\begin{lemma}
  The thus defined family $(\ell_{F,K})_{F,K}$ is indeed a lineator for the actegories announced in the construction.
\end{lemma}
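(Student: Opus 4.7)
The plan is to verify the three conditions defining a lax lineator: naturality of $\ell$ in each argument, the unit-preservation triangle, and the actor-preservation pentagon of Figure~\ref{fig:linearity}.

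First, unfold the two actegories. The source $[\CC,\CC]$-action on $[\CC,\DD]$ is $F \odot K \eqdef F \cdot K$, whose unitor and actor are the left unitor and associator of composition in $[\CC,\CC]$; the target is literally the canonical self-action of $[\CC,\CC]$. Under this identification, $\ell_{F,K}$ is (up to the orientation convention of the paper) the component of the associator $\alpha$ of $[\CC,\CC]$ at $(F,K,G)$, with third argument held fixed at $G$. Naturality of $\ell$ in $F$ and in $K$ then follows immediately from naturality of $\alpha$ in its first two arguments.

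For the unit triangle, the required identity reduces, after unfolding, to the standard derived identity relating $\alpha_{\Id{},K,G}$, $\lambda_K$, and $\lambda_{K\cdot G}$ in $[\CC,\CC]$, a direct consequence of the triangle law (via Kelly's coherence). For the actor pentagon, setting $v \eqdef F$, $w \eqdef F'$, $x \eqdef K$, one observes that its five nodes are all rebracketings of the four-fold composite $F \cdot F' \cdot K \cdot G$ and its five edges are components of $\alpha$; a direct matching shows that the diagram is precisely the pentagon law of $[\CC,\CC]$ applied at the quadruple $(F, F', K, G)$.

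The only obstacle is bookkeeping: tracking the orientation convention for $\alpha$, distinguishing the unitors and actors of the two actegories (which in this case all originate from the same monoidal structure on $[\CC,\CC]$), and matching the nodes of the actor law with those of the pentagon. In the formalization, each of the three laws should reduce to a single invocation of naturality of $\alpha$, the triangle law, or the pentagon law of $[\CC,\CC]$.
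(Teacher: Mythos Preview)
Your proposal is correct and, in fact, more explicit than the paper, whose proof consists solely of a pointer to the formalization. Your reduction of the two lineator laws to (a derived form of) the triangle identity and to the pentagon identity at the quadruple $(F,F',K,G)$ is exactly what happens under the hood; in the concrete setting of functor composition these coherence cells are identities, so each law holds on the nose.

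One small imprecision: you speak of ``the associator $\alpha$ of $[\CC,\CC]$'', but $K:\CC\to\DD$ and $G:\DD\to\CC$ are not endofunctors of $\CC$, so strictly speaking the relevant associator is that of functor composition in $\CAT$ rather than of the monoidal category $[\CC,\CC]$. This does not affect the argument, since associativity of functor composition is strict regardless, but it is worth stating accurately.
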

\begin{proof}
  The proof has been formalized (\coqident{CategoryTheory.Actegories.Examples.SelfActionInCATElementary}{lax_lineator_postcomp_SelfActCAT_alt}).
\end{proof}
We thus also have a lineator for the second functor in the presentation of $G^{([],t)}$.

We come to the case of nonempty $\vec a$ in the construction of $\theta^{(\vec a,t)}$. Concerning the pointwise binary product of functors, their lineators combine without any extra assumption (other than the existence of the binary products in the target category needed to formulate the pointwise binary product, \cf\coqident{CategoryTheory.Actegories.ConstructionOfActegoryMorphisms}{lax_lineator_coprod}). Therefore, a simple induction on the length of $\vec a$ concludes this construction.

It remains to construct strength $\theta^{(a,t)}$.
In view of the presentation of $G^{(a,t)}$ as composition of two functors and the above-mentioned fact that lineators compose, we only have to construct a strength for each component. The first one is just $\theta_0^\ell$ (\cref{constr:strengthprecompwithoptionstar}) for the functor given by precomposition with $\option^*~\ell$. Therefore it remains to construct a strength for postcomposition with $\proj{s} \hcomp \hat t$.
We do this analogously to the second component of $G^{([],t)}$: using reindexing, it suffices to construct a lineator for the involved $[\CC,\CC]$-actegories, but this is just another instance of \cref{constr:lineatorpostcomp}.

}

\section{On the Formalization}\label{sec:formal}

Most of the definitions and results presented in this paper are formalized and computer-checked in \UniMath~\cite{UniMath}, a library of univalent mathematics based on the computer proof assistant \Coq~\cite{coq}.
An exception is the application scenario in \cref{sec:intro};  its formalization is ongoing work. 
For this application, we can only offer the instantiation of the general constructions of this paper but not yet the inhabitation analysis alluded to in \cref{fig:threegraph}.
Our HTML documentation is derived from commit \href{https://github.com/UniMath/UniMath/tree/\longhash}{\shorthash} of the \UniMath library.
Proof-checking and creation of the HTML documentation can easily be reproduced at home by following the \UniMath compilation instructions -- do try this at home!

Concerning coinductive definitions, \Coq features a built-in mechanism for specifying coinductive types (via the keyword \verb|CoInductive|) and for defining functions by corecursion.
However, the \UniMath library departs from standard use of \Coq in that such declarations of coinductive datatypes  are not part of the language used in \UniMath.
Furthermore, definitions by corecursion in \Coq face numerous issues with guardedness, in particular with so-called ``mixed inductive-coinductive'' declarations \cite{BasoldPhD} -- declarations where the coinductive type makes use of a parameterized inductive type whose parameter is built with the coinductive type.
The coinductive calculus of our application scenario (see \cref{sec:intro}), with its lists of alternatives and arguments, falls into that class.
In the formalization of the contents of this paper, we therefore construct coinductive datatypes from other type constructors, rather than postulating (a class of) coinductive datatypes using meta-theoretic devices.
Our approach is thus comparable to the one employed for working with coinductive datatypes in the Isabelle system \cite{DBLP:journals/pacmpl/BlanchetteGPT19} and of the construction of indexed M-types in univalent foundations~\cite{DBLP:conf/tlca/AhrensCS15}; in all of these cases, a major goal is to keep the ``trusted code base'' small.

We now discuss some design choices we made in the formalization.
When formalizing mathematics in a formal system, some design choices need to be made that are not of mathematical significance: different choices lead to (trivially) equivalent mathematical concepts.
Nevertheless, making the right choices can be crucial for the maintainability and usability of the formal library.
An example of such a choice is the following.
In \cref{sec:moncatsacts}, we said that a monoidal category is given by a six-tuple,  with the tensor component a bifunctor.
However, for the sake of our formalization, we have chosen a different but equivalent format to present the tensor operation that we are calling ``whiskered''.
Here, the object mapping of $\otimes$ is replaced by its curried version, and the morphism mapping is replaced by two families of endofunctors on $\CC$ that represent the morphism mappings with one of the arguments fixed to the identity morphism -- thus the ``whiskerings'' of that bifunctor.
The whiskered definition avoids functors on cartesian products of categories.
Such functors do not behave well in practice:
the inference of the implicit object arguments -- which are pairs of objects -- to the functorial action on morphisms often fails, and thus these arguments need to be given explicitly.
This would make the formalization cumbersome -- which is why we adopted the whiskered format for our work and hence do not suffer from those problems.
A third alternative to the traditional definition and the whiskered definition would be a currying of the tensor, to be a functor into a functor category.
However, this definition would not provide a clean separation between data and properties -- another prerequisite for a library that scales well, in our experience.

\fscdarxiv{\bibliography{literatureFSCD24}}{}

\begin{thebibliography}{10}

\bibitem{DBLP:conf/tlca/AhrensCS15}
Benedikt Ahrens, Paolo Capriotti, and R{\'{e}}gis Spadotti.
\newblock Non-wellfounded trees in homotopy type theory.
\newblock In Thorsten Altenkirch, editor, {\em 13th International Conference on
  Typed Lambda Calculi and Applications, {TLCA} 2015, July 1-3, 2015, Warsaw,
  Poland}, volume~38 of {\em LIPIcs}, pages 17--30. Schloss Dagstuhl -
  Leibniz-Zentrum f{\"{u}}r Informatik, 2015.
\newblock \href {https://doi.org/10.4230/LIPIcs.TLCA.2015.17}
  {\path{doi:10.4230/LIPIcs.TLCA.2015.17}}.

\bibitem{DBLP:conf/rta/AhrensHLM19}
Benedikt Ahrens, Andr{\'{e}} Hirschowitz, Ambroise Lafont, and Marco Maggesi.
\newblock Modular specification of monads through higher-order presentations.
\newblock In Herman Geuvers, editor, {\em 4th International Conference on
  Formal Structures for Computation and Deduction, {FSCD} 2019, June 24-30,
  2019, Dortmund, Germany}, volume 131 of {\em LIPIcs}, pages 6:1--6:19.
  Schloss Dagstuhl - Leibniz-Zentrum f{\"{u}}r Informatik, 2019.
\newblock \href {https://doi.org/10.4230/LIPIcs.FSCD.2019.6}
  {\path{doi:10.4230/LIPIcs.FSCD.2019.6}}.

\bibitem{DBLP:journals/pacmpl/AhrensHLM20}
Benedikt Ahrens, Andr{\'{e}} Hirschowitz, Ambroise Lafont, and Marco Maggesi.
\newblock Reduction monads and their signatures.
\newblock {\em Proc. {ACM} Program. Lang.}, 4({POPL}):31:1--31:29, 2020.
\newblock \href {https://doi.org/10.1145/3371099} {\path{doi:10.1145/3371099}}.

\bibitem{signatures_to_monads}
Benedikt Ahrens, Ralph Matthes, and Anders M{\"{o}}rtberg.
\newblock From signatures to monads in {UniMath}.
\newblock {\em J. Autom. Reason.}, 63(2):285--318, 2019.
\newblock \href {https://doi.org/10.1007/s10817-018-9474-4}
  {\path{doi:10.1007/s10817-018-9474-4}}.

\bibitem{CPP22}
Benedikt Ahrens, Ralph Matthes, and Anders M{\"{o}}rtberg.
\newblock Implementing a category-theoretic framework for typed abstract
  syntax.
\newblock In Andrei Popescu and Steve Zdancewic, editors, {\em {CPP} '22: 11th
  {ACM} {SIGPLAN} International Conference on Certified Programs and Proofs,
  Philadelphia, PA, USA, January 17 - 18, 2022}, pages 307--323. {ACM}, 2022.
\newblock \href {https://doi.org/10.1145/3497775.3503678}
  {\path{doi:10.1145/3497775.3503678}}.

\bibitem{DBLP:journals/jfp/AllaisACMM21}
Guillaume Allais, Robert Atkey, James Chapman, Conor McBride, and James
  McKinna.
\newblock A type- and scope-safe universe of syntaxes with binding: their
  semantics and proofs.
\newblock {\em J. Funct. Program.}, 31:e22, 2021.
\newblock \href {https://doi.org/10.1017/S0956796820000076}
  {\path{doi:10.1017/S0956796820000076}}.

\bibitem{DBLP:conf/csl/AltenkirchR99}
Thorsten Altenkirch and Bernhard Reus.
\newblock Monadic presentations of lambda terms using generalized inductive
  types.
\newblock In J{\"{o}}rg Flum and Mario Rodr{\'{\i}}guez{-}Artalejo, editors,
  {\em Computer Science Logic, 13th International Workshop, {CSL} '99, 8th
  Annual Conference of the EACSL, Madrid, Spain, September 20-25, 1999,
  Proceedings}, volume 1683 of {\em Lecture Notes in Computer Science}, pages
  453--468. Springer, 1999.
\newblock \href {https://doi.org/10.1007/3-540-48168-0\_32}
  {\path{doi:10.1007/3-540-48168-0\_32}}.

\bibitem{lambdacalculuswithtypes}
Hendrik~Pieter Barendregt, Wil Dekkers, and Richard Statman.
\newblock {\em Lambda Calculus with Types}.
\newblock Perspectives in logic. Cambridge University Press, 2013.
\newblock URL:
  \url{http://www.cambridge.org/de/academic/subjects/mathematics/logic-categories-and-sets/lambda-calculus-types}.

\bibitem{BasoldPhD}
Henning Basold.
\newblock {\em Mixed Inductive-Coinductive Reasoning---Types, Programs and
  Logic}.
\newblock PhD thesis, Radboud University, Nijmegen, The Netherlands, 2018.
\newblock URL: \url{https://repository.ubn.ru.nl/handle/2066/190323}.

\bibitem{birdmeertens}
Richard Bird and Lambert Meertens.
\newblock {Nested Datatypes}.
\newblock In Johan Jeuring, editor, {\em Mathematics of {P}rogram
  {C}onstruction, {MPC}'98, Proceedings}, volume 1422 of {\em Lecture Notes in
  Computer Science}, pages 52--67. Springer, 1998.

\bibitem{DBLP:journals/jfp/BirdP99}
Richard~S. Bird and Ross Paterson.
\newblock {De Bruijn Notation as a Nested Datatype}.
\newblock {\em J. Funct. Program.}, 9(1):77--91, 1999.
\newblock URL:
  \url{http://journals.cambridge.org/action/displayAbstract?aid=44239}.

\bibitem{DBLP:journals/pacmpl/BlanchetteGPT19}
Jasmin~Christian Blanchette, Lorenzo Gheri, Andrei Popescu, and Dmitriy
  Traytel.
\newblock Bindings as bounded natural functors.
\newblock {\em Proc. {ACM} Program. Lang.}, 3({POPL}):22:1--22:34, 2019.
\newblock \href {https://doi.org/10.1145/3290335} {\path{doi:10.1145/3290335}}.

\bibitem{DBLP:conf/lics/BorthelleHL20}
Peio Borthelle, Tom Hirschowitz, and Ambroise Lafont.
\newblock A cellular {Howe} theorem.
\newblock In Holger Hermanns, Lijun Zhang, Naoki Kobayashi, and Dale Miller,
  editors, {\em {LICS} '20: 35th Annual {ACM/IEEE} Symposium on Logic in
  Computer Science, Saarbr{\"{u}}cken, Germany, July 8-11, 2020}, pages
  273--286. {ACM}, 2020.
\newblock \href {https://doi.org/10.1145/3373718.3394738}
  {\path{doi:10.1145/3373718.3394738}}.

\bibitem{Actegories}
Matteo Capucci and Bruno Gavranović.
\newblock Actegories for the working amthematician, 2022.
\newblock \href {https://doi.org/10.48550/ARXIV.2203.16351}
  {\path{doi:10.48550/ARXIV.2203.16351}}.

\bibitem{DBLP:journals/pacmpl/FioreS22}
Marcelo Fiore and Dmitrij Szamozvancev.
\newblock Formal metatheory of second-order abstract syntax.
\newblock {\em Proc. {ACM} Program. Lang.}, 6({POPL}):1--29, 2022.
\newblock \href {https://doi.org/10.1145/3498715} {\path{doi:10.1145/3498715}}.

\bibitem{DBLP:conf/lics/Fiore08}
Marcelo~P. Fiore.
\newblock Second-order and dependently-sorted abstract syntax.
\newblock In {\em Proceedings of the Twenty-Third Annual {IEEE} Symposium on
  Logic in Computer Science, {LICS} 2008, 24-27 June 2008, Pittsburgh, PA,
  {USA}}, pages 57--68. {IEEE} Computer Society, 2008.
\newblock \href {https://doi.org/10.1109/LICS.2008.38}
  {\path{doi:10.1109/LICS.2008.38}}.

\bibitem{DBLP:conf/lics/FiorePT99}
Marcelo~P. Fiore, Gordon~D. Plotkin, and Daniele Turi.
\newblock Abstract syntax and variable binding.
\newblock In {\em 14th Annual {IEEE} Symposium on Logic in Computer Science,
  Trento, Italy, July 2-5, 1999}, pages 193--202. {IEEE} Computer Society,
  1999.
\newblock \href {https://doi.org/10.1109/LICS.1999.782615}
  {\path{doi:10.1109/LICS.1999.782615}}.

\bibitem{DBLP:conf/fossacs/HirschowitzHLM22}
Andr{\'{e}} Hirschowitz, Tom Hirschowitz, Ambroise Lafont, and Marco Maggesi.
\newblock Variable binding and substitution for (nameless) dummies.
\newblock In Patricia Bouyer and Lutz Schr{\"{o}}der, editors, {\em Foundations
  of Software Science and Computation Structures - 25th International
  Conference, {FOSSACS} 2022, Held as Part of the European Joint Conferences on
  Theory and Practice of Software, {ETAPS} 2022, Munich, Germany, April 2-7,
  2022, Proceedings}, volume 13242 of {\em Lecture Notes in Computer Science},
  pages 389--408. Springer, 2022.
\newblock \href {https://doi.org/10.1007/978-3-030-99253-8\_20}
  {\path{doi:10.1007/978-3-030-99253-8\_20}}.

\bibitem{DBLP:phd/ethos/Hur10}
Chung{-}Kil Hur.
\newblock {\em Categorical equational systems : algebraic models and equational
  reasoning}.
\newblock PhD thesis, University of Cambridge, {UK}, 2010.
\newblock URL: \url{http://ethos.bl.uk/OrderDetails.do?uin=uk.bl.ethos.608664}.

\bibitem{DBLP:journals/corr/KurzPSV13}
Alexander Kurz, Daniela Petrisan, Paula Severi, and Fer{-}Jan de~Vries.
\newblock Nominal coalgebraic data types with applications to lambda calculus.
\newblock {\em Log. Methods Comput. Sci.}, 9(4), 2013.
\newblock \href {https://doi.org/10.2168/LMCS-9(4:20)2013}
  {\path{doi:10.2168/LMCS-9(4:20)2013}}.

\bibitem{lamiaux2024introduction}
Thomas Lamiaux and Benedikt Ahrens.
\newblock An introduction to different approaches to initial semantics, 2024.
\newblock \href {http://arxiv.org/abs/2401.09366} {\path{arXiv:2401.09366}}.

\bibitem{MatthesUustaluTCS}
Ralph Matthes and Tarmo Uustalu.
\newblock Substitution in non-wellfounded syntax with variable binding.
\newblock {\em Theoretical Computer Science}, 327(1-2):155--174, 2004.
\newblock \href {https://doi.org/10.1016/j.tcs.2004.07.025}
  {\path{doi:10.1016/j.tcs.2004.07.025}}.

\bibitem{FSCD24MatthesWullaertAhrens}
Ralph Matthes, Kobe Wullaert, and Benedikt Ahrens.
\newblock Substitution for non-wellfounded syntax with binders through monoidal
  categories.
\newblock In Jakob Rehof, editor, {\em 9th International Conference on Formal
  Structures for Computation and Deduction ({FSCD} 2024), July 10--13, 2024,
  Tallinn, Estonia}, volume 299 of {\em LIPIcs}, pages 22:1--22:22. Schloss
  Dagstuhl - Leibniz-Zentrum f{\"{u}}r Informatik, 2024.
\newblock To appear.
\newblock \href {https://doi.org/10.4230/LIPIcs.FSCD.2024.22}
  {\path{doi:10.4230/LIPIcs.FSCD.2024.22}}.

\bibitem{DBLP:conf/lics/Mellies17}
Paul{-}Andr{\'{e}} Melli{\`{e}}s.
\newblock Higher-order parity automata.
\newblock In {\em 32nd Annual {ACM/IEEE} Symposium on Logic in Computer
  Science, {LICS} 2017, Reykjavik, Iceland, June 20-23, 2017}, pages 1--12.
  {IEEE} Computer Society, 2017.
\newblock \href {https://doi.org/10.1109/LICS.2017.8005077}
  {\path{doi:10.1109/LICS.2017.8005077}}.

\bibitem{DBLP:journals/iandc/Milius05}
Stefan Milius.
\newblock Completely iterative algebras and completely iterative monads.
\newblock {\em Inf. Comput.}, 196(1):1--41, 2005.
\newblock \href {https://doi.org/10.1016/j.ic.2004.05.003}
  {\path{doi:10.1016/j.ic.2004.05.003}}.

\bibitem{DBLP:journals/pacmpl/Popescu24}
Andrei Popescu.
\newblock Nominal recursors as epi-recursors.
\newblock {\em Proc. {ACM} Program. Lang.}, 8({POPL}):425--456, 2024.
\newblock \href {https://doi.org/10.1145/3632857} {\path{doi:10.1145/3632857}}.

\bibitem{EspiritoSantoMatthesPintoCoinductiveApproachAPAL}
Jos{\'{e}}~Esp{\'{\i}}rito Santo, Ralph Matthes, and Lu{\'{\i}}s Pinto.
\newblock A coinductive approach to proof search through typed lambda-calculi.
\newblock {\em Ann. Pure Appl. Log.}, 172(10):103026, 2021.
\newblock \href {https://doi.org/10.1016/j.apal.2021.103026}
  {\path{doi:10.1016/j.apal.2021.103026}}.

\bibitem{coq}
The Coq~Development Team.
\newblock The {Coq} proof assistant, version 8.17, 2023.
\newblock URL: \url{https://zenodo.org/record/8161141}.

\bibitem{UniMath}
Vladimir Voevodsky, Benedikt Ahrens, Daniel Grayson, et~al.
\newblock {UniMath --- a computer-checked library of univalent mathematics}.
\newblock {Available at \url{http://unimath.github.io/UniMath/} }, 2021.

\bibitem{Zsido-phd}
Julianna Zsid{\'o}.
\newblock {\em {Typed Abstract Syntax}}.
\newblock PhD thesis, University of Nice Sophia Antipolis, 2010.
\newblock \url{http://tel.archives-ouvertes.fr/tel-00535944/}.

\end{thebibliography}
\end{document}